\newcommand\vldbdoi{10.14778/3538598.3538601}
\newcommand\vldbpages{1766 - 1778}
\newcommand\vldbvolume{15}
\newcommand\vldbissue{9}
\newcommand\vldbyear{2022}
\newcommand\vldbauthors{\authors}
\newcommand\vldbtitle{\shorttitle} 
\newcommand\vldbpagestyle{plain} 
\newtheorem{theorem}{Theorem}
\newtheorem{definition}{Definition}
\newcommand{\method}{\textsc{HDPView}\xspace}
\def\algbackskip{\hskip-\ALG@thistlm}
\newcommand\numberthis{\addtocounter{equation}{1}\tag{\theequation}}
\begin{document}

\title{HDPView: Differentially Private Materialized View for Exploring High Dimensional Relational Data}

\author{Fumiyuki Kato}
\affiliation{%
  \institution{Kyoto University}
}
\email{fumiyuki@db.soc.i.kyoto-u.ac.jp}

\author{Tsubasa Takahashi}
\affiliation{%
  \institution{LINE Corporation}
}
\email{tsubasa.takahashi@linecorp.com}

\author{Shun Takagi}
\affiliation{%
  \institution{Kyoto University}
}
\email{takagi.shun.45a@st.kyoto-u.ac.jp}

\author{Yang Cao}
\affiliation{%
  \institution{Kyoto University}
}
\email{yang@i.kyoto-u.ac.jp}

\author{Seng Pei Liew}
\affiliation{%
  \institution{LINE Corporation}
}
\email{sengpei.liew@linecorp.com}

\author{Masatoshi Yoshikawa}
\affiliation{%
  \institution{Kyoto University}
}
\email{yoshikawa@i.kyoto-u.ac.jp}

\begin{abstract}
How can we explore the unknown properties of high-dimensional sensitive relational data while preserving privacy?
We study how to construct an explorable privacy-preserving materialized view under differential privacy.
No existing state-of-the-art methods simultaneously satisfy the following essential properties in data exploration: workload independence, analytical reliability (i.e., providing error bound for each search query), applicability to high-dimensional data, and space efficiency.
To solve the above issues, we propose \method, which creates a differentially private materialized view by well-designed recursive bisected partitioning on an original data cube, i.e., count tensor.
Our method searches for block partitioning to minimize the error for the counting query, in addition to randomizing the convergence, by choosing the effective cutting points in a differentially private way, resulting in a less noisy and compact view.
Furthermore, we ensure formal privacy guarantee and analytical reliability by providing the error bound for arbitrary counting queries on the materialized views.
\method has the following desirable properties:
(a) \textit{Workload independence},
(b) \textit{Analytical reliability},
(c) \textit{Noise resistance on high-dimensional data},
(d) \textit{Space efficiency}.
To demonstrate the above properties and the suitability for data exploration, we conduct extensive experiments with eight types of range counting queries on eight real datasets.
\method outperforms the state-of-the-art methods in these evaluations.
\end{abstract}

\maketitle

\pagestyle{\vldbpagestyle}
\begingroup\small\noindent\raggedright\textbf{PVLDB Reference Format:}\\
\vldbauthors. \vldbtitle. PVLDB, \vldbvolume(\vldbissue): \vldbpages, \vldbyear.\\
\href{https://doi.org/\vldbdoi}{doi:\vldbdoi}
\endgroup
\begingroup
\renewcommand\thefootnote{}\footnote{\noindent
This work is licensed under the Creative Commons BY-NC-ND 4.0 International License. Visit \url{https://creativecommons.org/licenses/by-nc-nd/4.0/} to view a copy of this license. For any use beyond those covered by this license, obtain permission by emailing \href{mailto:info@vldb.org}{info@vldb.org}. Copyright is held by the owner/author(s). Publication rights licensed to the VLDB Endowment. \\
\raggedright Proceedings of the VLDB Endowment, Vol. \vldbvolume, No. \vldbissue\ %
ISSN 2150-8097. \\
\href{https://doi.org/\vldbdoi}{doi:\vldbdoi} \\
}\addtocounter{footnote}{-1}\endgroup



\section{Introduction} \label{sec:intro}

In the early stage of data science workflows, exploring a database to understand its properties in terms of multiple attributes is essential to designing the subsequent tasks. 
To understand the properties, data analysts need to issue a wide variety of range counting queries.
If the database is highly sensitive (e.g., personal healthcare records), data analysts may have little freedom to explore the data due to privacy issues \cite{su2020re, wang2019surfcon}.

How can we explore the properties of high-dimensional sensitive data while preserving privacy?
This paper focuses on guaranteeing differential privacy (DP) \cite{dwork2006differential, dwork2014algorithmic} via random noise injections.
As Figure \ref{fig:overview} shows, we especially study how to construct a \textit{privacy-preserving materialized view} (\textit{p-view} for short) of relational data, which enables data analysts to explore arbitrary range counting queries in a differential private way.
Note that once a p-view is created, the privacy budget is not consumed any more for publishing counting queries, different from interactive differentially private query systems \cite{johnson2018towards, ge2019apex, wilson2019differentially, rogers2020linkedin, johnson2018chorus}, which consume the budget every time queries are issued.
In this work, we describe the desirable properties of the p-view, especially in data exploration for high-dimensional data, and fill the gaps of the existing methods.

\begin{figure}[t]
    \centering
    \includegraphics[width=0.95\hsize]{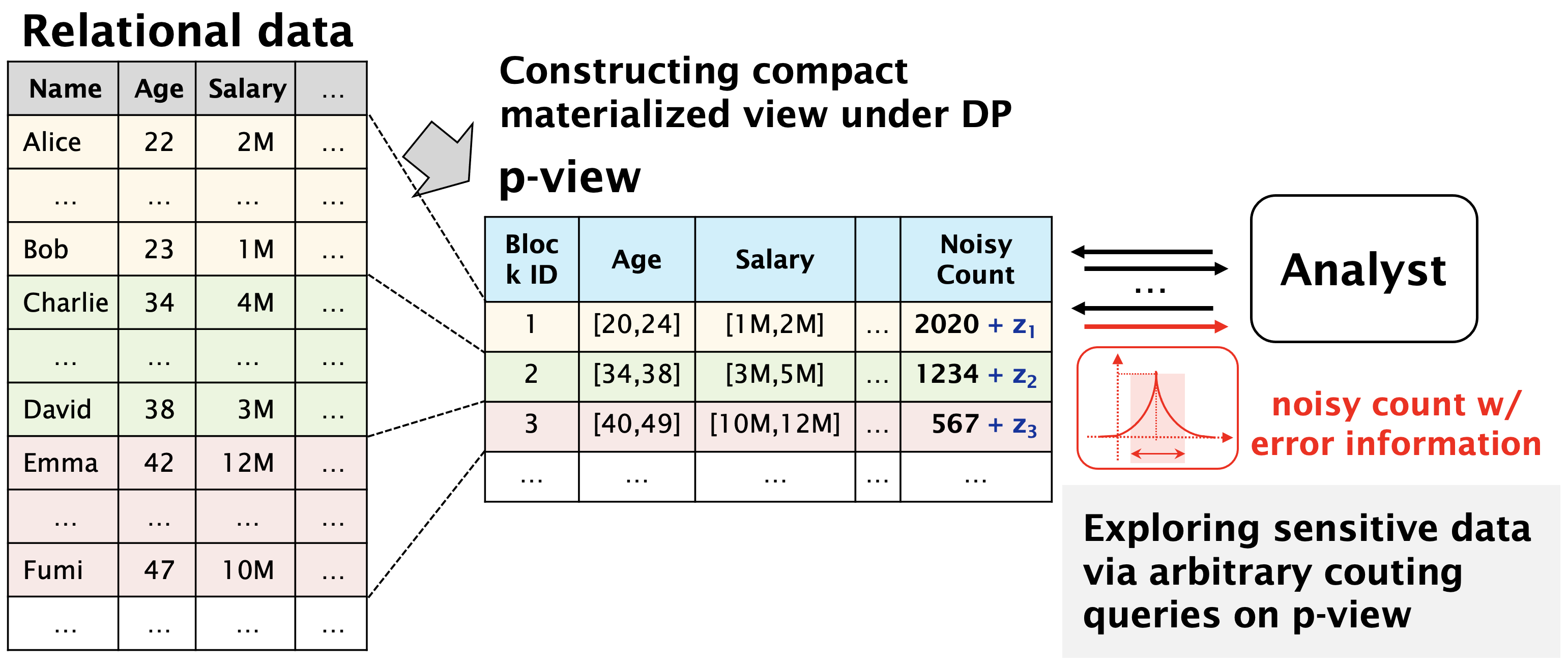}
    \caption{Data exploration through a \textit{privacy-preserving materialized view} (\textit{p-view} for short) of a multidimensional relational data. The p-view works as an independent query system. Analysts can explore sensitive and multidimensional data by issuing any range counting queries over the p-view before downstream data science workflows.}
    \label{fig:overview}
\end{figure}

\begin{table*}[t]
    \centering
    \caption{Only the proposed method achieves all requirements in private data exploration for high-dimensional data. Each competitor represents a baseline \cite{dwork2006differential}, data partitioning \cite{zhang2016privtree, li2014data, xiao2012dpcube}, workload optimization \cite{mckenna2018optimizing, li2015matrix, kotsogiannis2019privatesql}, and generative model \cite{zhang2017privbayes, qardaji2014priview, fan2020relational}, respectively.}
    \begin{tabular}{lccccc}
    \toprule
    & Identity\footnotemark[1] \cite{dwork2006differential}
    & Privtree \cite{zhang2016privtree}
    & HDMM \cite{mckenna2018optimizing}
    & PrivBayes \cite{zhang2017privbayes}
    & \textbf{\method (ours)} \\
    \midrule
    Workload independence
    & \checkmark
    & \checkmark
    & 
    & \checkmark
    & \checkmark
    \\
    Analytical reliability
    & \checkmark
    & \checkmark
    & \checkmark
    & 
    & \checkmark
    \\
    Noise resistance on high-dimensional
    & 
    & only low-dimensional
    & \checkmark
    & \checkmark
    & \checkmark
    \\
    Space efficiency
    & 
    & 
    & \checkmark
    & \checkmark
    & \checkmark
    \\
    \bottomrule
    \end{tabular}
\label{tbl:salesman}
\end{table*}

Several methods for constructing a p-view have been studied in the existing literature.
The most primitive method is to add Laplace noise \cite{dwork2006differential} to each cell of the count tensor (or vector) representing the original histogram and publish the perturbed data as a p-view.
While this noisy view can answer arbitrary range counting queries with a DP guarantee, it accumulates a large amount of noise.
Data-aware partitioning methods \cite{li2014data, Zhang2014TowardsAH, qardaji2013differentially, xiao2012dpcube, yaroslavtsev2013accurate, zhang2016privtree, kotsogiannis2019privatesql} are potential solutions, but they focus only on low-dimensional data due to the high complexity of discovering the optimal partitioning when the data have multiple attributes.
Additionally, these methods require exponentially large spaces as the dimensionality of the data increases due to the count tensor representation, which can easily make them impractical.
Workload-aware optimization methods \cite{li2014data, li2015matrix, mckenna2018optimizing, kotsogiannis2019privatesql} are promising techniques for releasing query answers for high-dimensional data; however, they cannot provide query-independent p-views needed in data exploration.

In addition, one of the most popular approaches these days is differentially private learning of generative models \cite{qardaji2014priview, chen2015differentially, zhang2017privbayes, jordon2018pate, takagi2021p3gm, harder2021dp, fan2020relational, Zhang2021PrivSynDP, ge2020kamino}.
Through the training of deep generative models \cite{jordon2018pate, takagi2021p3gm, harder2021dp, ge2020kamino, fan2020relational} or graphical models \cite{zhang2017privbayes, Zhang2021PrivSynDP}, counting queries and/or marginal queries can be answered directly from the model or indirectly with synthesized data via sampling.
These methods are very space efficient because the synthetic dataset or graphical model can be used to answer arbitrary counting queries.
However, these families rely on complex optimization methods such as DP-SGD \cite{abadi2016deep}, and it is very difficult to quantitatively estimate the error of counting queries using synthetic data, which eventually leads to a lack of reliability in practical use.
Unlike datasets often used in the literature, the data collected in the practical field may be completely unmodelable.
Table \ref{tbl:salesman} summarizes a comparison between the most related works and our method, \method, i.e., {\bf H}igh-{\bf D}imensional {\bf P}rivate materialized {\bf View}.
Each method is described in more detail in Section \ref{sec:related_works}.

\footnotetext[1]{Identity adds noise to the entries of the count vector by the Laplace mechanism \cite{dwork2006differential} and cannot directly perturb high-dimensional datasets due to the domains being too large; we measure estimated error using the method described in \cite{mckenna2018optimizing}.}

Our target use case is privacy-preserving data exploration on high-dimensional data, for which the p-view should have the following four properties:
\begin{itemize}
    \item \textbf{Workload independence}: Data analysts desire to issue arbitrary queries for exploring data. These queries should not be predefined.
    \item \textbf{Analytical reliability}: For reliability in practical data exploration, it is necessary to be able to estimate the scale of the error for arbitrary counting queries.
    \item \textbf{Noise resistance on high-dimensional data}: Range counting queries compute the sum over the range and accumulate the noise injected for DP. This noise accumulation makes the query answers useless. To avoid this issue, we need a robust mechanism even for high-dimensional data. 
    \item \textbf{Space efficiency}: It is necessary to generate spatially efficient views even for count tensors with a large number of total domains on various datasets.
\end{itemize}


\noindent
\textbf{Our proposal.}
To satisfy all the above requirements, we propose a simple yet effective recursive bisection method on a high-dimensional count tensor, \method.
Our proposed method has the same principle as \cite{xiao2012dpcube, li2014data, zhang2016privtree} of first partitioning a database into small blocks and then averaging over each block with noise.
Unlike the existing methods, \method can efficiently perform error-minimizing partitioning even on multidimensional count tensors instead of conventional 1D count vectors.
\method recursively partitions multidimensional blocks at a cutting point chosen in a differentially private manner while aiming to minimize the sum of aggregation error (AE) and perturbation error (PE).
Compared to Privtree \cite{zhang2016privtree}, our proposed method provides a more data-aware flexible cutting strategy and proper convergence of block partitioning, which results in smaller errors in private counting queries and much better spatial efficiency of the generated p-views.
Our method provides a powerful and practical solution for constructing a p-view under the DP constraint.
More importantly, the p-view generated by \method can work as a query processing system and expose the estimated error bound at runtime for any counting query without further privacy consumption.
This error information ensures reliable analysis for data explorers.

\begin{table*}[t]
    \centering
    \caption{\method provides low-error counting queries in average on various workloads and datasets, and high space-efficiency of privacy-preserving materialized view (p-view) when $\epsilon = 1.0$. (N/A is due to HDMM and PrivBayes do not create p-view.)}
    \begin{tabular}{lrrrrr}
    \toprule
    & Identity\footnotemark[1] \cite{dwork2006differential} & Privtree \cite{zhang2016privtree} & HDMM \cite{mckenna2018optimizing} & PrivBayes \cite{zhang2017privbayes} & \textbf{\method (ours)} \\
    \midrule
    Average relative RMSE
    & $1.94\times 10^7$
    & $7.05$
    & $35.34$
    & $3.79$
    & $\mathbf{1.00}$
    \\
    Average relative size of p-view
    & $4.59\times 10^{17}$
    & 5578.27
    & N/A
    & N/A
    & $\mathbf{1.00}$
    \\
    \bottomrule
    \end{tabular}
\label{tbl:query_arr}
\end{table*}

\noindent
\textbf{Contributions.}
Our contributions are threefold.
First, we design a \textit{p-view} and formalize the segmentation for a multidimensional count tensor to find an effective p-view as error minimizing optimization problem.
P-view can be widely used for data exploration process on multidimensional data and is a differentially private approximation of a multidimensional histogram that can release counting queries with analytical reliability.
Second, we propose \method described above to find a desirable solution to the optimization problem.
Our algorithm is more effective than conventional algorithms due to finding flexible partitions and more efficient due to making appropriate convergence decisions.
Third, we conduct extensive experiments, whose source code and dataset are open, and show that \method has the following merits.
(1) Effectiveness: \method demonstrates smaller errors for various range counting queries and outperforms the existing methods \cite{dwork2006differential, zhang2016privtree, li2014data, mckenna2018optimizing, zhang2017privbayes} on multi-dimensional real-world datasets. 
(2) Space efficiency: \method generates a much more compact representation of the p-view than the state-of-the-art (i.e., Privtree \cite{zhang2016privtree}) in our experiment.

\noindent
\textbf{Preview of result.}
We present a summary previewing of the experimental results.
Table \ref{tbl:query_arr} shows the average relative root mean squared error against (RMSE) of \method in eight types of range counting queries on eight real-world datasets and the average relative size of the p-view generated by the algorithms. 
With Identity, we obtain a p-view by making each cell of the original count tensor a converged block.
\method yields the smallest error score on average.
This is a desirable property for data explorations.
Furthermore, compared to that of Privtree \cite{zhang2016privtree}, the p-view generated by \method is more space efficient.



\section{Related Works}
\label{sec:related_works}
In the last decade, several works have proposed differentially private methods for exploring sensitive data.
Here, we describe the state-of-the-arts related to our work.

\noindent
\textbf{Data-aware partitioning.}
Data-aware partitioning is a conventional method that aims to directly randomize and expose the entire histogram for all domains (e.g., count vector, count tensor); thereby, it can immediately compose a p-view that answers all counting queries.
A na\"ive approach to constructing a differentially private view is adding Laplace noise \cite{dwork2006differential} to all values of a count vector; this is called the Identity mechanism.
This na\"ive approach results in prohibitive noise on query answers through the accumulation of noise over the grouped bins used by queries.
DAWA \cite{li2014data} and AHP \cite{Zhang2014TowardsAH} take data-aware partitioning approaches to reduce the amount of noise.
The partitioning-based approaches first split a raw count vector into bins and then craft differentially private aggregates by averaging each bin and injecting a single unit of noise in each bin.
However, these approaches work only for very low (e.g., one or two) -dimensional data due to the high complexity of discovering the optimal solution when the data have multiple attributes.
DPCube \cite{xiao2012dpcube} is a two-step multidimensional data-aware partitioning method, but the first step, obtaining an accurate approximate histogram, is difficult on high-dimensional data with small counts in each cell.

Privtree \cite{zhang2016privtree} and \cite{cormode2012differentially} perform multidimensional data-aware partitioning on count tensors, mainly targeting the spatial decomposition task for spatial data.
Unlike our method, this method uses a fixed quadtree as the block partitioning strategy, which leads to an increase in unnecessary block partitioning as the dimensionality increases. 
As a result, it downgrades the spatial efficiency and incurs larger perturbation noise.
In addition, this method aims to partition the blocks such that the count value is below a certain threshold, while our proposed method aims to minimize the AE of the blocks and reduce count query noise.

\noindent
\textbf{Optimization of given workloads.}
Another well-established approach is the optimization for a given workload.
Li et al. \cite{li2015matrix} introduced a matrix mechanism (MM) that crafts queries and outputs optimized for a given workload.
The high-dimensional MM (HDMM) \cite{mckenna2018optimizing} is a workload-aware data processing method extending the MM to be robust against noise for high-dimensional data.
PrivateSQL \cite{kotsogiannis2019privatesql} selects the view to optimize from pregiven workloads.
In the data exploration process, it is not practical to assume a predefined workload, and these methods are characterized by a loss of accuracy when optimized for a workload of wide variety of queries.

\noindent
\textbf{Private data synthesis.}
Private data synthesis, which builds a privacy-preserving generative model of sensitive data and generates synthetic records from the model, is also useful for data exploration.
Note that synthesized dataset can work as a p-view by itself.
PrivBayes \cite{zhang2017privbayes} can heuristically learn a Bayesian network of data distribution in a differentially private manner.
DPPro \cite{8006304}, Priview \cite{qardaji2014priview} and PrivSyn \cite{Zhang2021PrivSynDP} represent distribution by approximation with several smaller marginal tables.
While these methods provide a partial utility guarantee based on randomized mechanisms such as Laplace mechanisms or random projections, they face difficulties in providing an error bound for arbitrary counting queries on the synthesized data.
Differentially private deep generative models have also attracted attention \cite{acs2018differentially, jordon2018pate, takagi2021p3gm, harder2021dp}, but most of the works focus on the reconstruction of image datasets.
Fan et al. \cite{fan2020relational} studied how to build a good generative model based on generative adversarial nets (GAN) for tabular datasets.
Their experimental results showed that the utility of differentially private GAN was lower than that of PrivBayes for tabular data.
\cite{ren2018textsf} provides a solution for high-dimensional data in a local DP setting.

As mentioned in Section \ref{sec:intro}, the accuracy of these methods has improved greatly in recent years, but it is difficult to guarantee their utility for analysis using counting queries, and there are large gaps in practice.
DPPro \cite{8006304} utilizes a random projection \cite{johnson1984extensions} that preserve L2-distance to the original data in an analyzable form to give a utility guarantee, but this is different from the guarantee for each counting query.
CSM \cite{zheng2019differentially} gives a utility analysis for queries, however, their analysis ignores the effect of information loss due to compression, which may not be accurate.
Also, as shown in their experiments, they apply intense preprocessing to the domain size and do not show the effectiveness for high-dimensional data.
Our proposed method provides an end-to-end error analysis for arbitrary counting queries by directly constructing p-views from histograms without any intermediate generative model.

\noindent
\textbf{Querying and programming framework.}
PrivateSQL \cite{kotsogiannis2019privatesql} is a differentially private relational database system for multirelational tables, where for each table, it applies an existing noise-reducing method such as DAWA.
Unlike our method, PrivateSQL needs a given workload to design private views to release.
Flex \cite{johnson2018towards}, Google's work \cite{wilson2019differentially} and APEx \cite{ge2019apex} are SQL processing frameworks under DP.
They issue queries to the raw data, which can consume an infinite amount of the privacy budgets.
Hence, we believe that these DP-query processing engines are not suitable for data exploration tasks where many instances of trial and error may be possible.
Our method generates p-view, which can be used as a differentially private query system that allows any number of range counting queries to be issued.


\section{Preliminaries}


This section introduces essential knowledge for understanding our proposal.
We first describe notations this paper uses.
Then, we briefly explain DP.


\subsection{Notation}

Let $X$ be the input database with $n$ records consisting of an attribute set $A$ that has $d$ attributes $A=\{a_1, \dots, a_d\}$.
The domain $dom(a)$ of an attribute $a$ has a finite ordered set of discrete values, and the size of the domain is denoted as $|dom(a)|$.
The overall domain size of $A$ is $|dom(A)|=\prod_{i\in[d]} |dom(a_i)|$, where $[d]=\{1,\dots,d\}$.
In the case where attribute $a$ is continuous, we transform the domain into a discrete domain by binning, and in the case where attribute $a$ is categorical, we transform it into an ordered domain.
Then, $dom(a)$ can be represented as a \textit{range} $r[s_a, e_a]$ where for all $p_a \in dom(a)$, $s_a \le p_a \le e_a$.
For ranges $r_1, r_2$, $|r_1 \cap r_2|$ means the number of value $p_a$ satisfies $s_a \leq p_a \leq e_a $.

We consider transforming the database $X$ into the $d$-mode count tensor $\mathcal{X}$, where
given $d$ ranges $r_1,\dots,r_d$, 
$\mathcal{X}[r_1,\dots,r_d]$ represents the number of records where $(a_1 (\in r_1), \dots, a_d (\in r_d)) \in X$.
We utilize $x$ $(\in \mathcal{X})$ as a count value in $\mathcal{X}$; this corresponds to a cell of the count tensor.
We denote a subtensor of $\mathcal{X}$ as \textit{block} $\mathcal{B} \subseteq \mathcal{X}$.
$\mathcal{B}$ is also a $d$-mode count tensor, but its domain in each dimension is smaller than or equal to that of the original count tensor $\mathcal{X}$; i.e., each attribute $a_i$ $(i\in [d])$, $r[s_{a_i}, e_{a_i}]$ of $\mathcal{B}$ and $r[s'_{a_i}, e'_{a_i}]$ of $\mathcal{X}$ satisfy $s'_{a_i} \le s_{a_i}$ and $e_{a_i} \le e'_{a_i}$.
We denote the domain size of $\mathcal{B}$ as $|\mathcal{B}|$.

Last, we denote $q$ as a counting query and $\mathbf{W}$ as a workload.
$\mathbf{W}$ is a set of $|\mathbf{W}|$ counting queries, where $\mathbf{W} = \{q_1, ..., q_{|\mathbf{W}|}\}$, and $q(\mathcal{X})$ returns the counting query results for count tensor $\mathcal{X}$.

\subsection{Differential Privacy}
DP \cite{dwork2006differential} is a rigorous mathematical privacy definition that quantitatively evaluates the degree of privacy protection when we publish outputs.
DP is used in broad domains and applications \cite{bindschaedler2017plausible, chaudhuri2019capacity, papernot2016semi}. 
The importance of DP is supported by the fact that the US census announced '2020 Census results will be protected using “differential privacy”, the new gold standard in data privacy protection' \cite{abowd2018us}. 

\begin{definition}[$\epsilon$-differential privacy]
A randomized mechanism $\mathcal{M}:\mathcal{D}\rightarrow\mathcal{Z}$ satisfies $\epsilon$-DP if, for any two inputs $D, D' \in \mathcal{D}$ such that $D'$ differs from $D$ in at most one record and any subset of outputs $Z \subseteq \mathcal{Z}$, it holds that
\begin{equation}
\nonumber
  \Pr[\mathcal{M}(D)\in Z] \leq \exp(\epsilon) \Pr[\mathcal{M}(D')\in Z].
\end{equation}
\end{definition}
\noindent
We define databases $D$ and $D'$ as \textit{neighboring} databases.

Practically, we employ a randomized mechanism $\mathcal{M}$ that ensures DP for a function $f$.
The mechanism $\mathcal{M}$ perturbs the output of $f$ to cover $f$'s sensitivity, which is the maximum degree of change over any pair of datasets $D$ and $D'$.

\begin{definition}[Sensitivity]
The sensitivity of a function $f$ for any two neighboring inputs $D, D' \in \mathcal{D}$ is:
\begin{equation}
\nonumber
    \Delta_{f} = \sup_{D, D' \in \mathcal{D}} \|f(D)-f(D')\|.
\end{equation}
where $||\cdot||$ is a norm function defined in $f$'s output domain.
\end{definition}
\noindent
When $f$ is a histogram, $\Delta_{f}$ equals 1 \cite{hay2009boosting}.
Based on the sensitivity of $f$, we design the degree of noise to ensure DP.
The Laplace mechanism and exponential mechanism are well-known as standard approaches.

The Laplace mechanism can be used for randomizing numerical data.
Releasing the histogram is a typical use case of this mechanism.
\begin{definition}[Laplace Mechanism]
For function $f:\mathcal{D}\rightarrow\mathbb{R}^{n}$, the Laplace mechanism adds noise $f(D)$ as:
\begin{equation}
    f(D) + \text{Lap}(\Delta_f / \epsilon)^{n} .
\end{equation}
where $\text{Lap}(\lambda)^{n}$ denotes a vector of $n$ independent samples from a Laplace distribution $Lap(\lambda)$ with mean 0 and scale $\lambda$.
\end{definition}

The exponential mechanism is the random selection algorithm.
The selection probability is weighted based on a score in a quality metric for each item.
\begin{definition}[Exponential Mechanism]
Let $q$ be the quality metric for choosing an item $y \in Y$ in the database $D$.
The exponential mechanism randomly samples $y$ from $Y$ with weighted sampling probability defined as follows:
\begin{equation}
    \Pr[y] \sim \exp(\frac{\epsilon q(D, y)}{2\Delta_q}) .
\end{equation}
\end{definition}

Quantifying the privacy of differentially private mechanisms is essential for releasing multiple outputs.
Sequential composition and parallel composition are standard privacy accounting methods.

\begin{theorem}[Sequential Composition \cite{dwork2006differential}]
Let $\mathcal{M}_1, \dots, \mathcal{M}_k$ be mechanisms satisfying $\epsilon_1$-, $\dots, \epsilon_k$-DP.
Then, a mechanism sequentially applying $\mathcal{M}_1, \dots, \mathcal{M}_k$ satisfies ($\sum_{i\in[k]} \epsilon_i$)-DP.
\end{theorem}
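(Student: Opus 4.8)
The plan is to reduce the joint statement to a pointwise bound on the likelihood ratio of the composite mechanism and then integrate over the output event. First I would fix an arbitrary pair of neighboring databases $D, D' \in \mathcal{D}$ and regard the composite mechanism as the map $\mathcal{M}(D) = (\mathcal{M}_1(D), \dots, \mathcal{M}_k(D))$ taking values in the product space $\mathcal{Z} = \mathcal{Z}_1 \times \dots \times \mathcal{Z}_k$. Because the mechanisms are applied sequentially, I would allow each $\mathcal{M}_i$ to receive the earlier outputs $z_1,\dots,z_{i-1}$ as auxiliary input, which is the general (adaptive) form of the claim; the non-adaptive case is a special case of this.

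The central step is to factor the density of the output tuple via the chain rule,
\begin{equation}
\nonumber
\Pr[\mathcal{M}(D) = (z_1,\dots,z_k)] = \prod_{i=1}^{k} \Pr[\mathcal{M}_i(D) = z_i \mid z_1,\dots,z_{i-1}],
\end{equation}
and identically for $D'$. For each fixed prefix $z_1,\dots,z_{i-1}$, the inner factor is exactly an output probability of the single mechanism $\mathcal{M}_i$ evaluated on the neighboring pair $D, D'$, so its $\epsilon_i$-DP guarantee yields
\begin{equation}
\nonumber
\Pr[\mathcal{M}_i(D) = z_i \mid z_1,\dots,z_{i-1}] \le \exp(\epsilon_i)\, \Pr[\mathcal{M}_i(D') = z_i \mid z_1,\dots,z_{i-1}].
\end{equation}
Multiplying these $k$ inequalities and using $\prod_{i\in[k]} \exp(\epsilon_i) = \exp(\sum_{i\in[k]} \epsilon_i)$ gives the pointwise ratio bound for the joint output. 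Finally I would integrate (or sum) this bound over any measurable output set $Z \subseteq \mathcal{Z}$ to recover $\Pr[\mathcal{M}(D)\in Z] \le \exp(\sum_{i\in[k]}\epsilon_i)\,\Pr[\mathcal{M}(D')\in Z]$, which is precisely the definition of $(\sum_{i\in[k]}\epsilon_i)$-DP.

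I expect the main obstacle to be making the chain-rule factorization rigorous beyond the discrete case: when the output spaces $\mathcal{Z}_i$ are continuous, the conditional ``probabilities'' must be read as densities with respect to a suitable product base measure, and one must check that the $\epsilon_i$-DP bound, originally phrased for output events, transfers to the conditional density ratio. The cleanest route is to prove the discrete case first, where the products are honest probabilities and the argument is elementary, and then extend to general output spaces by a standard measure-theoretic limiting argument, or equivalently by invoking that the DP inequality over all measurable events is equivalent to an almost-everywhere bound on the density ratio. A secondary point worth stating explicitly is that the same derivation already covers adaptive composition, since it is exactly the conditioning on $z_1,\dots,z_{i-1}$ that permits later mechanisms to depend on earlier outputs without weakening the cumulative guarantee.
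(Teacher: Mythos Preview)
Your argument is correct and is the standard proof of sequential composition: factor the joint output distribution via the chain rule, bound each conditional factor using the $\epsilon_i$-DP guarantee of $\mathcal{M}_i$, multiply, and integrate over the output event. Your handling of the adaptive case (allowing $\mathcal{M}_i$ to depend on $z_1,\dots,z_{i-1}$) and your remarks on the discrete-versus-continuous subtlety are accurate and appropriately cautious.

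There is nothing to compare against here: the paper states this theorem as a cited preliminary result from \cite{dwork2006differential} and does not supply its own proof. Your write-up is exactly the argument one finds in the original source and in standard expositions such as Dwork and Roth's monograph, so it would serve perfectly well as the omitted proof.
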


\begin{theorem}[Parallel Composition \cite{mcsherry2009privacy}]
Let $\mathcal{M}_1, \dots, \mathcal{M}_k$ be mechanisms satisfying $\epsilon_1$-, $\dots, \epsilon_k$-DP.
Then, a mechanism applying $\mathcal{M}_1, \dots, \mathcal{M}_k$ to the disjoint datasets $D_1, \dots , D_k$ in parallel satisfies $(\max_{i\in [k]}\epsilon_i)$-DP.
\end{theorem}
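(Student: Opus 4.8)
The plan is to view the composed mechanism as the tuple $\mathcal{M}(D) = (\mathcal{M}_1(D_1), \dots, \mathcal{M}_k(D_k))$ and to exploit the disjointness of the partition $D_1, \dots, D_k$. First I would fix two neighboring databases $D, D'$ that differ in at most one record. Because the partitioning assigns each record to exactly one block, that single differing record falls into exactly one block, say index $j$; hence $D_i = D_i'$ for every $i \neq j$, while $D_j$ and $D_j'$ are themselves neighboring (differing in at most one record). This reduction---turning a single global change into a single local change confined to one block, with all other blocks untouched---is the conceptual heart of the argument.

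Next, for a product output event $Z = Z_1 \times \cdots \times Z_k$, I would factor the joint probability using the mutual independence of the $k$ mechanisms:
\begin{equation}
\nonumber
\Pr[\mathcal{M}(D) \in Z] = \prod_{i \in [k]} \Pr[\mathcal{M}_i(D_i) \in Z_i].
\end{equation}
For each $i \neq j$ the corresponding factor is unchanged under $D \to D'$ since $D_i = D_i'$, and for $i = j$ the $\epsilon_j$-DP guarantee of $\mathcal{M}_j$ on the neighboring pair $(D_j, D_j')$ gives $\Pr[\mathcal{M}_j(D_j) \in Z_j] \le \exp(\epsilon_j)\,\Pr[\mathcal{M}_j(D_j') \in Z_j]$. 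Recombining the factors yields $\Pr[\mathcal{M}(D) \in Z] \le \exp(\epsilon_j)\,\Pr[\mathcal{M}(D') \in Z]$, and since $\epsilon_j \le \max_{i\in[k]}\epsilon_i$ the claimed $(\max_{i\in[k]}\epsilon_i)$-DP bound follows on all product events.

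Finally, I would extend the bound from product events to an arbitrary measurable output set $Z \subseteq \mathcal{Z}$. Because the $k$ outputs are independent, their joint law is the product measure of the individual output laws, which for $i \neq j$ coincide under $D$ and $D'$; only the $j$-th factor changes, and it satisfies the pointwise $\exp(\epsilon_j)$ domination above. I would therefore write $\Pr[\mathcal{M}(D)\in Z]$ as an integral of the indicator of $Z$ against this product measure and apply the domination inside the integral over the $j$-th coordinate (Tonelli's theorem justifies the interchange since all integrands are nonnegative). The main obstacle I anticipate is precisely this measure-theoretic extension, together with cleanly justifying that the partition map is data-independent enough that a one-record change perturbs only a single block; once the ``only one block changes'' reduction is nailed down, the remainder is a routine factorization and a single invocation of the DP guarantee for that block.
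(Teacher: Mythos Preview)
The paper does not prove this theorem at all: it is stated in the Preliminaries as a known result with a citation to McSherry~\cite{mcsherry2009privacy}, and no proof is given. Your argument is the standard and correct proof of parallel composition---localize the single changed record to one block $j$, factor the joint output distribution by independence, apply $\epsilon_j$-DP to the $j$-th factor, and bound $\epsilon_j$ by the maximum. Your caveat that the partition into $D_1,\dots,D_k$ must be data-independent (so that a one-record change cannot move records between blocks) is exactly the right technical condition to flag; with that in place the extension from product events to arbitrary measurable sets via the product-measure argument is routine.
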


\section{Problem Formulation}
\label{sec:problem_formulation}

\subsection{Segmentation as Optimization}

\begin{figure*}
    \centering
    \includegraphics[width=0.95\hsize]{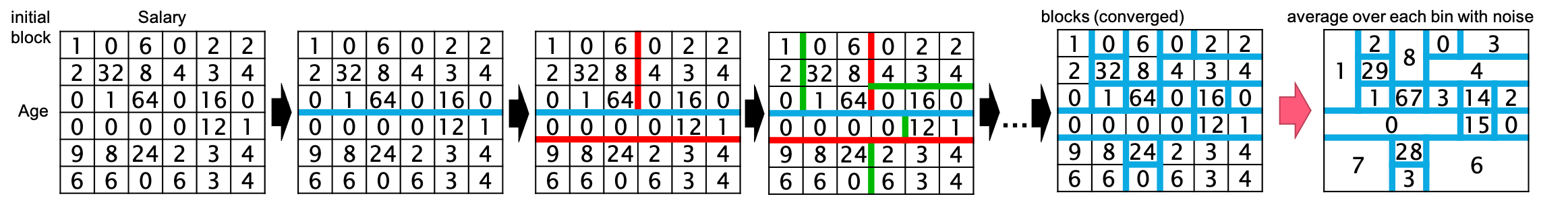}
    \caption{\method efficiently discovers blocks (i.e., groups of count cells) with smaller AEs (black arrow) and averages over each block with injected noise (red arrow). The p-view stores the randomized counts in a blockwise way.}
    \label{fig:algo_overview}
\end{figure*}

This section describes the foundation of multidimensional data-aware segmentation that seeks a solution for the differentially private view $\Tilde{\mathcal{X}}$ from the input count tensor $\mathcal{X}$.
Every count $\Tilde{x} \in \Tilde{\mathcal{X}}$ is sanitized to satisfy DP.
We formulate multidimensional block segmentation as an optimization problem.

\noindent
\textbf{Foundation.}
Given a count tensor $\mathcal{X}$, we consider partitioning $\mathcal{X}$ into $m$ blocks $\pi=\{\mathcal{B}_1, ... ,\mathcal{B}_{m}\}$.
The blocks satisfy $\mathcal{B}_i \cap \mathcal{B}_j=\emptyset$ where $i,j\in [m]$, $j\neq i$ and $\mathcal{B}_1 \cup \dots \cup \mathcal{B}_m=\mathcal{X}$.
We denote the sum over $\mathcal{B}_i$ as $S_i=\sum_{x'\in \mathcal{B}_i} x'$ and its perturbed output as $\Tilde{S_i}=S_i + z_i$.
We can sample $z_i$ with the Laplace mechanism $Lap(1/\epsilon)$ and craft the $\epsilon$-differentially private sum in $\mathcal{B}_i$.

For any count $x$ in the block $\mathcal{B}_i$, we have two types of errors: 
\textit{Perturbation Error} (PE) and \textit{Aggregation Error} (AE).
Assuming that we replace any count $x \in \mathcal{B}_i$ with $\bar{x}_i=(S_i+z_i)/|\mathcal{B}_i|$, the absolute error between $x$ and $\bar{x}_i$ can be computed as
\begin{equation}
    |x-\bar{x}_i|  
    = \left|\left(x-\frac{S_i}{|\mathcal{B}_i|}\right)-\frac{z_i}{|\mathcal{B}_i|}\right|
    \leq \left|x-\frac{S_i}{|\mathcal{B}_i|}\right|+\left|\frac{z_i}{|\mathcal{B}_i|}\right| .
\end{equation}
Therefore, the total error over block $\mathcal{B}_i$, namely, the segmentation error (SE),  can be given by:
\begin{equation}
    \text{SE}(\mathcal{B}_i) = \sum_{x\in \mathcal{B}_i} |x-\bar{x}_i| \leq \text{AE}(\mathcal{B}_i) + \text{PE}(\mathcal{B}_i)
    \label{eq:two_errors}
\end{equation}
where
\begin{align}
    \text{AE}(\mathcal{B}_i) &\coloneqq \sum_{x\in \mathcal{B}_i} \left|x-\frac{S_i}{|\mathcal{B}_i|}\right|, 
    \label{eq:ae} \\
    \text{PE}(\mathcal{B}_i) &\coloneqq \left|z_i\right| . 
    \label{eq:pe}
\end{align}
(\ref{eq:ae}) and (\ref{eq:pe}) represent the AE and the PE, respectively.

\noindent
\textbf{Problem.}
The partitioning makes the PE of each block $\frac{1}{|\mathcal{B}_i|}$ times smaller than those of the original counts with Laplace noise.
Furthermore, we consider the expectation of the SE
\begin{equation}
    \small
    \begin{split}
        \mathbb{E}\left[\sum_{i \in [m]} \text{SE}(\mathcal{B}_i)\right]
            &\leq \mathbb{E}\left[\sum_{i \in [m]} \text{AE}(\mathcal{B}_i)\right] + \mathbb{E}\left[\sum_{i \in [m]} \text{PE}(\mathcal{B}_i)\right]\\
            &= \sum_{i \in [m]} \text{AE}(\mathcal{B}_i) + \sum_{i \in [m]} \mathbb{E}\left[\text{PE}(\mathcal{B}_i)\right] \\
            &= \sum_{i \in [m]} \text{AE}(\mathcal{B}_i) + m \cdot \frac{1}{\epsilon} .
    \end{split}
    \label{eq:expect_error}
\end{equation}
Thus, to discover the optimal partition $\pi$, we need to minimize Eq. (\ref{eq:expect_error}).
The optimization problem is denoted as follows:
\begin{equation}
\begin{split}
\underset{\mathbf{\pi}} {\text{minimize}} & \sum_{\mathcal{B}_i \in \pi} \left(
\text{AE}(\mathcal{B}_i) + \frac{1}{\epsilon}\right) \\
\text{subject to} & \ \mathcal{B}_i \cap \mathcal{B}_{j\neq i} = \emptyset,\;\; \mathcal{B}_i, \mathcal{B}_j \in \pi \\ 
& \bigcup_{\mathcal{B}_i \in \pi}{\mathcal{B}_i} = \mathcal{X}
\end{split}
\label{eq:optim}
\end{equation}

\noindent
\textbf{Challenges.} 
It is not easy to discover the optimal partition $\pi$.
This problem is an instance of the set partitioning problem \cite{balas1976set}, which is known to be NP-complete, where the objective function is computed by brute-force searching for every combination of candidate blocks.
It is hard to solve since the search space is basically a very large scale due to large $|dom(A)|$.
Therefore, this paper seeks an efficient heuristic solution with a good balance between utility (i.e., smaller errors) and privacy.

\subsection{P-view Definition}
Our proposed p-view has a simple structure.
The p-view consists of a set of blocks, each of which has a range for each attribute and an appropriately randomized count value, as shown in Figure \ref{fig:overview}.
Formally, we define the p-view as follows.
\begin{equation}
\label{eq:p_view}
\centering
\begin{split}
    \mbox{p-view}\; \Tilde{\mathcal{X}} &= \{ \mathcal{B}_1,...,\mathcal{B}_m \},  \\
    \mbox{for } i\in [m],  \mathcal{B}_i &= (\{r[s^{(i)}_{a_1}, e^{(i)}_{a_1}],...,r[s^{(i)}_{a_d}, e^{(i)}_{a_d}]\}, \Tilde{S_i})
\end{split}
\end{equation}
Thus, each block $\mathcal{B}_i$ has this $d$-dimensional domain and the sanitized sum of count values $\Tilde{S_i}$.

In the range counting query processing, a counting query $q$ needs to have the range condition $c_q = \{r[s^{(q)}_{a_1}, e^{(q)}_{a_1}],...,r[s^{(q)}_{a_d}, e^{(q)}_{a_d}]\}$.
Let the ranges of $\mathcal{B}_i$ be $\{r[s^{(i)}_{a_1}, e^{(i)}_{a_1}],...,r[s^{(i)}_{a_d}, e^{(i)}_{a_d}]\}$, and we calculate the intersection of $c_q$ and the block and add the count value according to the size of the intersection.
Hence, the result can be calculated as follows.
\begin{equation}
    \label{eq:count_query}
    \small
    q(\mathcal{\Tilde{\mathcal{X}}}) = \sum_{i=1,...,m}{\left(\prod_{l=1,...,d}{\left(\left|r[s^{(q)}_{a_l}, e^{(q)}_{a_l}] \cap r[s^{(i)}_{a_l}, e^{(i)}_{a_l}]\right|\right)}*\frac{\Tilde{S_i}}{|\mathcal{B}_i|}\right)}
\end{equation}
The number of intersection calculations is proportional to the number of blocks, and the complexity of the query processing is $\mathcal{O}(md)$.


\section{Proposed Algorithm}

This section introduces our proposed solution.
Our solution constructs a p-view of the input relational data while preserving utility and privacy with analytical reliability to estimate errors in the arbitrary counting queries against the p-view (Eq.(\ref{eq:count_query})).


\subsection{Overview}

Our challenge is to devise a simple yet effective algorithm that enables us to efficiently search a block partitioning with small total errors and DP guarantees.
As a realization of the algorithm, we propose \method.

Figure \ref{fig:algo_overview} illustrates an overview of our proposed algorithm.
First, \method creates the initial block $\mathcal{B}^{(0)}$ that covers the whole count tensor $\mathcal{X}$.
Second, we recursively bisect a block $\mathcal{B}$ (initially $\mathcal{B}=\mathcal{B}^{(0)}$) into two disjoint blocks $\mathcal{B}_L$ and $\mathcal{B}_R$. 
Before bisecting $\mathcal{B}$, we check whether the AE over $\mathcal{B}$ is sufficiently small. 
If the result of the check is positive, we stop the recursive bisection for $\mathcal{B}$.
Otherwise, we continue to split $\mathcal{B}$.
We pick a splitting point $p \in dom(a)$ ($a \in A$) for splitting $\mathcal{B}$ into $\mathcal{B}_L$ and $\mathcal{B}_R$ which have smaller AEs.
Although splitting does not always result in smaller total AEs, proper cut point obviously makes AEs much smaller.
Third, \method recursively executes these steps separately for $\mathcal{B}_L$ and $\mathcal{B}_R$.
After convergence is met for all blocks, \method generates a randomized aggregate by $S_i+z_i$ where $z_i\sim Lap(1/\epsilon)$ for each block $\mathcal{B}_i$.
Finally, for all $x \in \mathcal{B}_i$, we obtain the randomized count $\Tilde{x}=(S_i+z_i)/|\mathcal{B}_i|$.


The abovementioned algorithm can discover blocks that heuristically reduce the AEs, and is efficient due to its simplicity.
However, the question is \textit{how can we make the above algorithm differentially private?}.
To solve this question, we introduce two mechanisms, \textit{random converge} (Section \ref{sec:random_converge}) and \textit{random cut} (Section \ref{sec:random_cut}).
Random converge determines the convergence of the recursive bisection, and random cut determines the effective single cutting point.
These provide reasonable partitioning strategy to reduce the total errors with small privacy budget consumption.


The overall algorithm of \method are described in Algorithm \ref{algo:proposed}.
Let $\epsilon_b=\epsilon_r+\epsilon_p$ be the total privacy budget for \method, where $\epsilon_r$ is the budget for the recursive bisection and $\epsilon_p$ is the budget for the perturbation.
\method utilizes $\gamma \epsilon_r$ for random converge and $(1-\gamma) \epsilon_r$ for random cut ($0 \le \gamma \le 1$).
$\alpha$ is a hyperparameter that determines the size of $\lambda$ and $\delta$, where $\lambda$ corresponds to the Laplace noise scale of random converge (Lines \ref{line:lambda}, \ref{line:judge}) and $\delta$ is a bias term for AE (Lines \ref{line:delta}, \ref{line:bae}).
These are, sketchily, tricks for performing random converge with depth-independent scales, which are explained in Section \ref{sec:random_converge} and a detailed proof of DP is given in Section \ref{sec:privacy_accouting}.
The algorithm runs recursively (Lines \ref{line:run_1}, \ref{line:run_2}, \ref{line:run_3}), alternating between random converge (Lines \ref{line:conv_s}-\ref{line:conv_e}) and random cut (Lines \ref{line:cut_s}-\ref{line:cut_e}).
The random converge stops when the AE becomes small enough, consuming a total budget of $\gamma \epsilon_r$ independent of the number of depth.
The random cut consumes a budget of $(1-\gamma) \epsilon_r /\kappa$ for each cutting point selection until the depth exceeds $\kappa$.
$\kappa$ is set as $\kappa=\beta\log_2{\bar{n}}$, where $\beta > 0$ is hyperparameter and $\bar{n}$ is the total domain size of the data.
As we see later in Theorem \ref{theorem:monotonic_ae}, AE is not increased by splitting, so if the depth is greater than $\kappa$, we split randomly without any privacy consumption until convergence.
After the recursive bisection converges, \method perturbs the count by adding the Laplace noise while consuming $\epsilon_p$.

\begin{algorithm}
    \caption{\method}\label{algo:proposed}
    \algrenewcommand\algorithmicrequire{\textbf{Input:}}
    \algrenewcommand\algorithmicensure{\textbf{Output:}}
    \begin{algorithmic}[1]
    \Require initial block $\mathcal{B}^{(0)}$, privacy budget $\epsilon_{b}$, recursive bisection budget ratio $\epsilon_{r}/\epsilon_{b}$, hyperparameters $\alpha$, $\beta$, $\gamma$
    \Ensure p-view $\Tilde{\mathcal{X}}$
    \Procedure{\method}{$\mathcal{B}^{(0)}, \epsilon_{b}, \epsilon_{r}/\epsilon_{b}, \alpha, \beta, \gamma$}
        \State $\epsilon_{r} \leftarrow \epsilon_{b} \cdot (\epsilon_{r}/\epsilon_{b})$; \quad $\epsilon_{p} \leftarrow \epsilon_{b} \cdot (1-\epsilon_{r}/\epsilon_{b})$
        \State $\Tilde{n} \leftarrow$ \textsc{TotalDomainSizeOf}$(\mathcal{X})$
        \State $\kappa \leftarrow \beta \log_2{\Tilde{n}}$ \quad // maximum depth of random cut
        \State $\pi \leftarrow \{\}$; \quad $k \leftarrow 1$ \quad // converged blocks; current depth
        \State $\theta \leftarrow 1/\epsilon_p$ \quad // threshold
        \State $\epsilon_{cut} \leftarrow (1-\gamma) \epsilon_r / \kappa$ \quad // privacy budget for random cut
        \State $\lambda \leftarrow \left(\frac{2\alpha-1}{\alpha-1} + 1\right) \cdot \left(\frac{2}{\gamma \epsilon_r}\right)$ // noise scale for random converge\label{line:lambda}
        \State $\delta \leftarrow \lambda \log{\alpha}$ \quad // bias parameter \label{line:delta}
        \State \textsc{RecursiveBisection}$(\mathcal{B}^{(0)}, \pi, \epsilon_{cut}, k, \kappa, \theta, \lambda, \delta)$ \label{line:run_1}
        
        \State $\Tilde{\mathcal{X}} \leftarrow$ \textsc{Perturbation}$(\pi, \epsilon_{p})$
        \State \Return $\Tilde{\mathcal{X}}$
    \EndProcedure
    
    \Procedure{\textsc{RecursiveBisection}}{$\mathcal{B}$, $\pi$, $\epsilon_{cut}$, $k$, $\kappa$, $\theta$, $\lambda$, $\delta$}
        \State /* Random Converge */ \label{line:conv_s}
        \State $\text{BAE}(\mathcal{B}) \leftarrow \mbox{max}(\theta + 2 - \delta,\,  \text{AE}(\mathcal{B})-k\delta )$ \label{line:bae}
        \If {$\text{BAE}(\mathcal{B}) + \mathrm{Lap}(\lambda) \le \theta$} \label{line:judge}
        \State $\pi \leftarrow \pi \bigcup \mathcal{B}$
        \State \Return \label{line:conv_e}
        \EndIf 
        \State /* Random Cut */ \label{line:cut_s}
        \If {$k \le \kappa$}
        \ForAll{$i \in [d], j \in [|dom(a_i)|]$}
        \State $quality[i, j] \leftarrow Q(B,a_{ij})$
        \EndFor
        \State $(i^*,j^*)\leftarrow \textsc{WeightedSampling}(\epsilon_{cut}, quality)$
        \Else
        \State $(i^*,j^*)\leftarrow \textsc{RandomSampling}([d], [|dom(a_i)|])$ \label{line:random_sampling}
        \EndIf
        \State $(\mathcal{B}_{L}, \mathcal{B}_{R}) \leftarrow \textsc{Split}(i^*,j^*)$ \label{line:cut_e}
        \State /* Repeat Recursively */ \label{line:rec_s}
        \State \textsc{RecursiveBisection}$(\mathcal{B}_{L}, \pi, \epsilon_{cut}, k+1, \kappa, \theta, \lambda, \delta)$ \label{line:run_2}
        \State \textsc{RecursiveBisection}$(\mathcal{B}_{R}, \pi, \epsilon_{cut}, k+1, \kappa, \theta, \lambda, \delta)$ \label{line:run_3}
        \State \Return \label{line:rec_e}
    \EndProcedure
    \end{algorithmic}
\end{algorithm}

\subsection{Random Converge}
\label{sec:random_converge}
AE decreases by properly splitting the blocks, however unnecessary block splitting leads to an increase in PE as mentioned above.
To stop the recursive bisection at the appropriate depth, we need to obtain the exact AE of the block, which is a data-dependent output, therefore we need to provide a DP guarantee.
One approach is to publish differential private AE so that making the decision for the stop is also DP by the post-processing property.
In other words, the stop is determined by $\text{AE}(\mathcal{B}) + Lap(\lambda) \le \theta$ where $\theta$ is a threshold indicating AE is small enough.
However, this method consumes privacy budget every time the AE is published, and the budget cannot be allocated unless the depth of the partition is decided in advance.
Therefore, we utilize the observation for the \textit{privacy loss} of Laplace mechanism-based threshold query \cite{zhang2016privtree} and design the biased AE (BAE) of the block $\mathcal{B}$ instead of $\text{AE}(\mathcal{B})$: $\text{BAE}(\mathcal{B}) = \mbox{max}(\theta + 2 - \delta,\,  \text{AE}(\mathcal{B})-k\delta )$, where $k$ is the current depth of bisection, $\delta$ $(> 0)$ is a bias parameter, i.e., we determine the convergence by $\text{BAE}(\mathcal{B}) + Lap(\lambda) \le \theta$.
Intuitively, the BAE is designed to tightly bound the privacy loss of the any number of Laplace mechanism-based threshold queries with constant noise scale $\lambda$.
When the value is sufficiently larger than the threshold, this privacy loss decreases exponentially \cite{zhang2016privtree}.
Then, it can be easily bounded by an infinite series regardless of the number of queries.
Conversely, when the value is small compared to the threshold, each threshold query consumes a constant budget.
To limit the number of such budget consumptions, a bias $\delta$ is used to force a decrease in the value for each threshold query (i.e., each depth) because BAE has a minimum and if the value is guaranteed to be less than the minimum for adjacent databases, the privacy loss is zero.
The design of our BAE allows for two constant budget consumptions at most, with the remainder being bounded by an infinite series.
We give a detailed proof in Section \ref{sec:privacy_accouting}.
As a whole, since BAE is basically close to AE, AEs are expected to become sufficiently small overall.

Then, we consider about $\theta$ where if $\theta$ is too large, block partitioning will not sufficiently proceed, causing large AEs, and if it is too small, more blocks will be generated, leading to increase in total PEs.
To prevent unwanted splitting, it is appropriate to stop when the increase in PE is greater than the decrease in AE.
We design the threshold $\theta$ as $1/\epsilon_p$ which is the standard deviation of the Laplace noise to be perturbed.
Considering the each bisection increases the total PE by $1/\epsilon_p$, when the AE becomes less than the PE, the division will increase the error at least.
Hence, it is reasonable to stop under this condition.

\subsection{Random Cut}
\label{sec:random_cut}
Here, the primary question is how to pick a reasonable cutting point from all attribute values in a block $\mathcal{B}$ under DP.
Our intuition is that a good cutting point results in smaller AEs in the two split blocks.
We design random cut by combining an exponential mechanism with scoring based on the total AE after splitting.

Let $\mathcal{B}^{(p)}_L$ and $\mathcal{B}^{(p)}_R$ be the blocks split from $\mathcal{B}$ by the cutting point $p$, and the quality function $Q$ of $p$ in $\mathcal{B}$ is defined as follows:
\begin{equation}
Q(\mathcal{B},p) = - (\text{AE}(\mathcal{B}^{(p)}_L) + \text{AE}(\mathcal{B}^{(p)}_R) ).
\end{equation}
Then, we compute the score for all attribute values $p \in dom(a)$, $a \in A$ , and satisfies $|\mathcal{B}^{(p)}_L|\ge 1$ and $|\mathcal{B}^{(p)}_R|\ge 1$.
Note that the number of candidates for $p$ is proportional to the sum of the domains for each attribute $\sum_{i\in[d]} |dom(a_i)|$, not to the total domains $\prod_{i\in[d]} |dom(a_i)|$.
We employ weighted sampling via an exponential mechanism to choose one cutting point $p*$. 
The sampling probability of $p$ is proportional to  
\begin{equation}
\small
\label{eq:random_cut}
    \Pr[p*=p] \sim \exp\left(\frac{\epsilon Q(B,p)}{2\Delta_Q}\right)
\end{equation}
where $\Delta_Q$ is the L1-sensitivity of the quality metric $Q$.
We denote the L1-sensitivity of AE as $\Delta_{AE}$, and we can easily find $\Delta_Q=2\Delta_{AE}$ because $Q$ is the sum of two AEs.
Thus, each time a cut point $p$ is published according to such weighted sampling, a privacy budget of $\epsilon$ is consumed.
We set $\epsilon$ as the budget allocated to random cut (i.e., $(1-\gamma)\epsilon_r$) divided by $\kappa$.
If the cutting depth exceeds $\kappa$, we switch to random sampling (Line \ref{line:random_sampling} in Algorithm \ref{algo:proposed}).
Hence, cutting will not stop regardless of the depth or budget.


Compared to Privtree \cite{zhang2016privtree}, for a $d$-dimensional block, at each cut, \method generates just 2 blocks with this random cut while Privtree generates $2^d$ blocks with fixed cutting points.
Privtree's heuristics prioritizes finer partitioning, which sufficiently works in low-dimensional data because AEs become very small and the total PEs is not so large.
In high-dimensional data, however, it causes unnecessary block splitting resulting in too much PEs.
\method carefully splits blocks one by one, thus suppressing unnecessary block partitioning and reducing the number of blocks i.e., smaller PEs.
It also enables flexibly shaped multidimensional block partitioning.
Moreover, while whole design of \method including convergence decision logic and cutting strategy are based on an error optimization problem as described in Section \ref{sec:problem_formulation}, Privtree has no such background.
This allows \method to provide effective block partitioning rather than simply fewer blocks, which we empirically confirm in Section \ref{sec:range_query}.

\subsection{Privacy Accounting}
\label{sec:privacy_accouting}
For privacy consumption accounting, since \method recursively splits a block into two disjoint blocks, we only have to trace a path toward convergence.
In other words, because \method manipulates all the blocks separately, we can track the total privacy consumption by the parallel composition for each converged block.
The information published by the recursive bisection is the result of segmentation; however, note that since there is a constraint on the cutting method for the block, it must be divided into two parts; in the worst case, the published blocks may expose all the cutting points.
For a given converged block $\mathcal{B}$, we denote the series of cutting points by $S_{\mathcal{B}}=[p_1,... . p_k]$, and $\mathcal{B}_{p_i}$ as the block after being divided into two parts at cutting point $p_i$.
To show the DP guarantee, let $D$ and $D'$ be the neighboring databases, and let $\Pr[S_{\mathcal{B}}|D]$ be the probability that $S_{\mathcal{B}}$ is generated from $D$.
We need to show that for any $D$, $D'$, and $S_{\mathcal{B}}$ that
\begin{equation}
\small
    \left|\frac{\Pr[S_{\mathcal{B}}|D]}{\Pr[S_{\mathcal{B}}|D']}\right| \le e^{\epsilon}
\end{equation}
to show that the recursive bisection satisfies $\epsilon$-DP.

The block with the largest $\left|\frac{\Pr[S_{\mathcal{B}}|D]}{\Pr[S_{\mathcal{B}}|D']}\right|$ of the converged disjoint blocks is $\mathcal{B}^*$, which has the longest $S_{\mathcal{B}^*}$ and contains different data between $D$ and $D'$.
Random converge and random cut are represented as follows.
\begin{align*}
\small
    &\left| \cfrac{\Pr[S_{\mathcal{B}^*}|D]}{\Pr[S_{\mathcal{B}^*}|D']}\right| =
        \cfrac{\Pr[\text{BAE}(\mathcal{B}_{p_0})+Lap(\lambda)>\theta]}{\Pr[\text{BAE}(\mathcal{B}'_{p_0})+Lap(\lambda)>\theta]} \\
        &\,\,\,\,\,\, \cdot \cfrac{\Pr[p*=p_1|D]}{\Pr[p*=p_1|D']}
        \cdot \cfrac{\Pr[\text{BAE}(\mathcal{B}_{p_1})+Lap(\lambda)>\theta]}{\Pr[\text{BAE}(\mathcal{B}'_{p_1})+Lap(\lambda) > \theta]} \\
        & \,\,\,\,\,\, \cdot \cdots \cdot  \cfrac{\Pr[p*=p_k|D]}{\Pr[p*=p_k|D']} \cdot
        \cfrac{\Pr[\text{BAE}(\mathcal{B}_{p_k})+Lap(\lambda) \le \theta]}{\Pr[\text{BAE}(\mathcal{B}'_{p_k})+Lap(\lambda) \le \theta]} \numberthis
\end{align*}
where $\mathcal{B}_{p_0}$ is the initial count tensor and for all $i$, $\mathcal{B}'_{p_i}$ indicates a neighboring block for $\mathcal{B}_{p_i}$.
Taking the logarithm,
{\footnotesize
\begin{align*}
    &\;\;\;\;\;\;\mathrm{ln}\left(\cfrac{\Pr[S_{\mathcal{B}^*}|D]}{\Pr[S_{\mathcal{B}^*}|D']}\right) = \underbrace{
        \sum_{i=1}^{k}{\mathrm{ln}\left(\cfrac{\Pr[p*=p_i|D]}{\Pr[p*=p_i|D']}\right)}}_{(*1) \mbox{ : for random cut}}  \notag \\
        &+ \underbrace{\sum_{i=0}^{k}{\mathrm{ln}\left(\cfrac{\Pr[\text{BAE}(\mathcal{B}_{p_i})+Lap(\lambda)>\theta]}{\Pr[\text{BAE}(\mathcal{B}'_{p_i})+Lap(\lambda)>\theta]}\right)} +  \mathrm{ln}\left(\cfrac{\Pr[\text{BAE}(\mathcal{B}_{p_k})+Lap(\lambda) \le \theta|D]}{\Pr[\text{BAE}(\mathcal{B}'_{p_k})+Lap(\lambda) \le \theta|D']}\right)}_{(*2) \mbox{ : for random converge}} \numberthis \label{eq:log_likeli} .
\end{align*}
}
and let the first item of the right-hand of Eq.(\ref{eq:log_likeli}) be $(*1)$, and the other items be $(*2)$.

$(*1)$ corresponds to the privacy of the random cut, with each probability following Eq.(\ref{eq:random_cut}).
Given $\epsilon=\epsilon_{cut}$, for any $k$, the following holds from sequential composition.
\begin{align}
\label{eq:random_cut_gurantree}
\begin{aligned}
    \left|\sum_{i=1}^{k}{\mathrm{ln}\left(\cfrac{\Pr[p*=p_i|D]}{\Pr[p*=p_i|D']}\right)}\right| \le \kappa \epsilon_{cut}=(1-\gamma) \epsilon_{r} .
\end{aligned}
\end{align}

The following are privacy guarantees for the other part, $(*2)$, based on the observations presented in \cite{zhang2016privtree}.
First, we consider the sensitivity of AE $\Delta_{AE}$.

\begin{theorem}
The L1-sensitivity of the AE is $2(1-1/|\mathcal{B}|)$.
\end{theorem}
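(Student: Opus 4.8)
The plan is to bound the change in $\text{AE}(\mathcal{B})$ when a single record is added to or removed from the database, since the L1-sensitivity is defined as the supremum of $|\text{AE}(\mathcal{B})-\text{AE}(\mathcal{B}')|$ over neighboring blocks $\mathcal{B}, \mathcal{B}'$. Recall from Eq.~(\ref{eq:ae}) that $\text{AE}(\mathcal{B}) = \sum_{x \in \mathcal{B}} |x - S/|\mathcal{B}||$ where $S = \sum_{x \in \mathcal{B}} x$ is the block sum and $|\mathcal{B}|$ is the (fixed) number of cells. A neighboring database differs in exactly one record, so exactly one cell value changes by $1$: some cell $x_{j_0}$ becomes $x_{j_0} \pm 1$, while $|\mathcal{B}|$ stays constant. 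First I would fix attention on the case where one record is added, so $x_{j_0} \to x_{j_0} + 1$ and the sum changes $S \to S + 1$, forcing the mean to shift $S/|\mathcal{B}| \to (S+1)/|\mathcal{B}|$.

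Next I would split the total change in AE into two effects and bound each separately. Writing $\mu = S/|\mathcal{B}|$ and $\mu' = (S+1)/|\mathcal{B}|$, the mean moves by exactly $1/|\mathcal{B}|$. For the $|\mathcal{B}|-1$ cells whose value does not change, each deviation term $|x - \mu|$ changes to $|x - \mu'|$, and by the reverse triangle inequality each such term moves by at most $|\mu - \mu'| = 1/|\mathcal{B}|$; summed over these cells this contributes at most $(|\mathcal{B}|-1)/|\mathcal{B}| = 1 - 1/|\mathcal{B}|$. For the single changed cell $x_{j_0}$, its contribution goes from $|x_{j_0} - \mu|$ to $|x_{j_0} + 1 - \mu'| = |x_{j_0} + 1 - \mu - 1/|\mathcal{B}||$; the value shifts by $1$ and the mean shifts by $1/|\mathcal{B}|$ in the same direction, so this term changes by at most $|1 - 1/|\mathcal{B}|| = 1 - 1/|\mathcal{B}|$ as well. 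Adding the two bounds gives a total change of at most $2(1 - 1/|\mathcal{B}|)$, which is exactly the claimed sensitivity.

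To confirm that this bound is tight (so the sensitivity equals, and does not merely bound, $2(1-1/|\mathcal{B}|)$), I would exhibit a configuration achieving it: take a block in which all cells sit on the same side of the mean relative to the perturbation, for instance a uniform block where every deviation term responds in the worst-case direction, so that both the single-cell change and the aggregate mean-shift across the other cells add constructively. This witnesses that the supremum defining $\Delta_{AE}$ is attained at $2(1-1/|\mathcal{B}|)$.

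The main obstacle I anticipate is handling the absolute values carefully: the change in each $|x - \mu|$ term depends on the sign of $x - \mu$ and on whether the mean shift flips that sign, so a naive application of the triangle inequality could be loose. The cleanest route is to treat the function $x \mapsto |x - \mu|$ as $1$-Lipschitz in both $x$ and $\mu$ and combine the two unit perturbations additively, then verify tightness separately; this avoids a tedious case analysis over the signs of individual deviations while still yielding the exact constant $2(1-1/|\mathcal{B}|)$.
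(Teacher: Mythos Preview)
Your proposal is correct and follows essentially the same approach as the paper: both decompose the change in $\text{AE}$ into the mean-shift contribution over the $|\mathcal{B}|-1$ unchanged cells (each bounded by $1/|\mathcal{B}|$) and the single changed cell's contribution (bounded by $1-1/|\mathcal{B}|$), summing to $2(1-1/|\mathcal{B}|)$. Your write-up is in fact more careful than the paper's, which jumps directly from the expression for $\text{AE}(\mathcal{B}')$ to the final constant without spelling out the Lipschitz/triangle-inequality step, and you additionally address tightness, which the paper omits.
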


\begin{proof}
Let $\mathcal{B}'$ be the block that differs by only one count from $\mathcal{B}$.
The $\text{AE}(\mathcal{B}')$ can be computed as follows:
\begin{equation*}
\small
    \begin{split}
        \text{AE}(\mathcal{B}')
            &= \sum_{i \neq j \in [|\mathcal{B}|]} \left| x_i - \frac{S+1}{|\mathcal{B}|} \right| 
            + \left| x_j + 1 - \frac{S+1}{|\mathcal{B}|} \right| . \\
    \end{split}
\end{equation*}
Finally, the L1-sensitivity of $\text{AE}$ can be derived as:
\begin{equation*}
    \Delta_{AE} 
    = (|\mathcal{B}|-1) \frac{1}{|\mathcal{B}|} + 1 - \frac{1}{|\mathcal{B}|} \\
    = 2(1-1/n)
\end{equation*}
\end{proof}

Thus, we also obtain $|\text{BAE}(\mathcal{B}) - \text{BAE}(\mathcal{B}')| \le 2$, and 
\begin{align*}
\small
    (*2) &\le
        \sum_{i=0}^{k-1}{\mathrm{ln}\left(\cfrac{\Pr[\text{BAE}(\mathcal{B}_{p_i})+Lap(\lambda)>\theta]}{\Pr[\text{BAE}(\mathcal{B}_{p_i})-2+Lap(\lambda)>\theta]}\right)} \\
        & \quad + \mathrm{ln}\left(\cfrac{\Pr[\text{BAE}(\mathcal{B}_{p_k})+Lap(\lambda) \le \theta]}{\Pr[\text{BAE}(\mathcal{B}_{p_k})+2+Lap(\lambda) \le \theta]}\right) \numberthis \label{eq:upper_bound_by_sensitivity}.
\end{align*}
Furthermore, from the proof in the Appendix in \cite{zhang2016privtree}, when we have $f(x) = \mathrm{ln}\left(\frac{\Pr[x+Lap(\lambda)>\theta]}{\Pr[x-2+Lap(\lambda)>\theta]}\right)$, then
\begin{equation}
\label{eq:upper_bound_by_inequality}
\small
\begin{split}
\begin{cases}
    f(x) \le \frac{2}{\lambda}, & (\theta-x+2 > 0) \\
    f(x) \le \frac{2}{\lambda}\exp\left(\frac{\theta-x+2}{\lambda}\right), & (\theta-x+2 \le 0)
\end{cases}
\end{split}
\end{equation}

Next, we show the monotonic decreasing property of AE for block partitioning.
\begin{theorem}
\label{theorem:monotonic_ae}
For any $i=0,...,k-1$, $\text{AE}(\mathcal{B}_{p_i}) \ge \text{AE}(\mathcal{B}_{p_{i+1}})$.
\end{theorem}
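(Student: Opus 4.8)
The plan is to reduce the statement to a one-step inequality and then show that $\text{AE}$ cannot increase when we pass from a block to \emph{either} of its two children. Since \method\ produces $\mathcal{B}_{p_{i+1}}$ from $\mathcal{B}_{p_i}$ by a single cut, $\mathcal{B}_{p_{i+1}}$ is one of the two disjoint children $\mathcal{B}_L,\mathcal{B}_R$ obtained by bisecting $\mathcal{B}:=\mathcal{B}_{p_i}$. Hence it suffices to prove that for every bisection of a block $\mathcal{B}$ into $\mathcal{B}_L$ and $\mathcal{B}_R$ one has $\text{AE}(\mathcal{B}_L)\le\text{AE}(\mathcal{B})$ and, symmetrically, $\text{AE}(\mathcal{B}_R)\le\text{AE}(\mathcal{B})$; applying this to whichever child continues the path to the converged block then gives $\text{AE}(\mathcal{B}_{p_i})\ge\text{AE}(\mathcal{B}_{p_{i+1}})$. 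I emphasize that I would \emph{not} try to prove the superficially natural stronger claim $\text{AE}(\mathcal{B}_L)+\text{AE}(\mathcal{B}_R)\le\text{AE}(\mathcal{B})$: it is false in general (for counts $0,11,11\mid 20,9,9$ the parent has $\text{AE}=24$ while each child has $\text{AE}=44/3$, summing to $88/3>24$), so the per-child bound is the correct target, and the proof must do more than simply discard one child.

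Fix notation $N=|\mathcal{B}|$, $n_L=|\mathcal{B}_L|$, $n_R=|\mathcal{B}_R|$, with block means $\mu=S/N$, $\mu_L=S_L/n_L$, $\mu_R=S_R/n_R$; I will use the elementary identities $\mu-\mu_L=\tfrac{n_R}{N}(\mu_R-\mu_L)$ and $\mu_R-\mu=\tfrac{n_L}{N}(\mu_R-\mu_L)$. First I would split the parent error across the two children, $\text{AE}(\mathcal{B})=\sum_{x\in\mathcal{B}_L}|x-\mu|+\sum_{x\in\mathcal{B}_R}|x-\mu|$, and then bound the two sums below separately. For the left sum I would introduce the sign vector $\sigma_x=\operatorname{sign}(x-\mu_L)\in\{-1,0,1\}$, which satisfies both $\sigma_x(x-\mu_L)=|x-\mu_L|$ and $|x-\mu|\ge\sigma_x(x-\mu)$; summing gives $\sum_{x\in\mathcal{B}_L}|x-\mu|\ge\text{AE}(\mathcal{B}_L)-(\mu-\mu_L)\Sigma_L$, where $\Sigma_L=\sum_{x\in\mathcal{B}_L}\sigma_x$ and, crucially, $|\Sigma_L|\le n_L$.

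The key idea, and the step I expect to be the crux, is to retain the sibling sum and use it to absorb the adverse cross term $(\mu-\mu_L)\Sigma_L$ (which arises precisely because \method\ represents a block by its mean rather than its median). By the triangle inequality $\sum_{x\in\mathcal{B}_R}|x-\mu|\ge\big|\sum_{x\in\mathcal{B}_R}(x-\mu)\big|=n_R|\mu_R-\mu|=\tfrac{n_L n_R}{N}|\mu_R-\mu_L|$, while $(\mu-\mu_L)\Sigma_L\le|\mu-\mu_L|\,|\Sigma_L|\le\tfrac{n_R}{N}|\mu_R-\mu_L|\cdot n_L=\tfrac{n_L n_R}{N}|\mu_R-\mu_L|$. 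Combining the three estimates yields $\text{AE}(\mathcal{B})-\text{AE}(\mathcal{B}_L)\ge\tfrac{n_L n_R}{N}|\mu_R-\mu_L|-(\mu-\mu_L)\Sigma_L\ge 0$, which is the desired per-child bound; the estimate for $\mathcal{B}_R$ follows by exchanging the roles of $L$ and $R$.

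The main obstacle is therefore conceptual rather than computational: recognizing that the additive form fails, that the per-child form is the true and sufficient statement, and that the deviation mass of the \emph{sibling} block exactly dominates the cross term introduced by centering at the mean. Once that structure is in place the remaining steps are routine, and ties in $\operatorname{sign}$ are harmless since the argument only ever uses $|\Sigma_L|\le n_L$.
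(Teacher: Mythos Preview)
Your proof is correct and takes a genuinely different route from the paper's. The paper establishes the more primitive monotonicity ``appending a single cell to a block cannot decrease its AE'': if $\mathcal{B}^{+}$ is $\mathcal{B}$ together with one extra value $x_{k+1}$, then the new term satisfies $|x_{k+1}-\bar{x}^{+}|=k\,|\bar{x}-\bar{x}^{+}|$, which dominates the total possible shrinkage $\sum_{i\le k}\bigl(|x_i-\bar{x}|-|x_i-\bar{x}^{+}|\bigr)\le k\,|\bar{x}-\bar{x}^{+}|$; iterating cell by cell from the child up to the parent then gives the theorem. You instead handle the whole bisection in one shot, using the sibling's aggregate deviation $\sum_{x\in\mathcal{B}_R}|x-\mu|\ge n_R|\mu_R-\mu|=\tfrac{n_Ln_R}{N}|\mu_R-\mu_L|$ to absorb the re-centering cross term $(\mu-\mu_L)\Sigma_L$. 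Both arguments ultimately rest on the same arithmetic fact---that the aggregate of the ``other'' cells sits exactly far enough from the combined mean to offset the shift in the retained terms---but the paper's incremental version is shorter and avoids the sign-vector bookkeeping, whereas your direct version yields the explicit quantitative slack $\text{AE}(\mathcal{B})-\text{AE}(\mathcal{B}_L)\ge\tfrac{n_Ln_R}{N}|\mu_R-\mu_L|-(\mu-\mu_L)\Sigma_L\ge 0$. Your counterexample showing that the additive bound $\text{AE}(\mathcal{B}_L)+\text{AE}(\mathcal{B}_R)\le\text{AE}(\mathcal{B})$ fails is a worthwhile observation that the paper does not make explicit.
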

\begin{proof}
We show that when $\mathcal{B}^{+}$ is an arbitrary block $\mathcal{B}$ with an arbitrary element $x$ $(>0)$ added to it, the AEs always satisfy $\text{AE}(\mathcal{B})\le AE(\mathcal{B}^{+})$.
Let the elements in $\mathcal{B}$ be $x_1, ..., x_k$, and let $\mathcal{B}^{+}$ be the block with $x_{k+1}$ added.
The mean values in each block are $\bar{x}=\frac{1}{k}(x_1+\cdots+x_k)$ and $\bar{x}^{+}=\frac{1}{k+1}(x_1+\cdots+x_{k+1})$ and $\text{AE}(\mathcal{B})=\sum_{i=1}^k{|x_i-\bar{x}|}$ and $\text{AE}(\mathcal{B}^{+})=\sum_{i=1}^{k+1}{|x_i-\bar{x}^{+}|}$.
Considering how much the AE can be reduced with the addition of $x_{k+1}$ to $\mathcal{B}$, $|x_i-\bar{x}|-|x_i-\bar{x}^{+}| \le |\bar{x}-\bar{x}^{+}|$ holds for each $i$ ($=1,...,k$), so $\text{AE}(\mathcal{B}) - \text{AE}(\mathcal{B}^{+})$ is at most $k \cdot |\bar{x}-\bar{x}^{+}|$.
On the other hand, with the addition of $x_{k+1}$, AE increases by at least $|x_{k+1}-\bar{x}^{+}|$ because this is a new item.
Since $x_{k+1}=(k+1)\bar{x}^{+} - (x_1+\cdots+x_k)=(k+1)\bar{x}^{+} - k\bar{x}$, then $|x_{k+1}-\bar{x}^{+}|=|k\cdot(\bar{x}-\bar{x}^{+})| = k\cdot|\bar{x}-\bar{x}^{+}|$.
Hence, $\text{AE}(\mathcal{B}^{+}) - \text{AE}(\mathcal{B}) \ge k \cdot |\bar{x}-\bar{x}^{+}| - k \cdot |\bar{x}-\bar{x}^{+}| = 0$ always holds.
Therefore, since $\mathcal{B}_{p_i}$ always has more elements than $\mathcal{B}_{p_{i+1}}$, $\text{AE}(\mathcal{B}_{p_i}) \ge \text{AE}(\mathcal{B}_{p_{i+1}})$.
\end{proof}

Considering $\text{BAE}(\mathcal{B})$, there exists a natural number $m$ ($1 \le m \le k$) where if $i<m$, $\text{BAE}(\mathcal{B}_{p_i})) \ge \text{BAE}(\mathcal{B}_{p_{i+1}})+\delta \ge \theta + 2 - \delta$, if $i=m$, $\theta + 2 \ge \text{BAE}(\mathcal{B}_{p_{i}})\ge \theta + 2 - \delta$, and if $m<i$,  $\text{BAE}(\mathcal{B}_{p_i}) = \theta + 2 - \delta$.
Therefore, using Eqs.(\ref{eq:upper_bound_by_sensitivity}, \ref{eq:upper_bound_by_inequality}),
\begin{equation}
\label{eq:bound_last}
\small
\begin{split}
    (*2) & \le
        \frac{2}{\lambda} + \sum_{i=1}^{m-1} \frac{2}{\lambda}\exp\left(\frac{\theta - \text{BAE}(\mathcal{B}_{p_i}) + 2}{\lambda}\right) + \frac{2}{\lambda} \\
        &\le
        \frac{4}{\lambda} + \frac{2}{\lambda} \cdot \frac{1}{1-\exp\left(-\frac{\delta}{\lambda}\right)} \\
        & = \frac{2}{\lambda} \cdot \frac{3\exp\left(\frac{\delta}{\lambda}\right)-2}{\exp\left(\frac{\delta}{\lambda}\right)-1}.
\end{split}
\end{equation}
Thus, to make $(*2)$ satisfy $\gamma \epsilon_r$-DP, $\frac{2}{\lambda} \cdot \frac{3\exp\left(\frac{\delta}{\lambda}\right)-2}{\exp\left(\frac{\delta}{\lambda}\right)-1} \le \gamma \epsilon_r$ should holds.
Since the $\lambda$ and $\delta$ that satisfy these conditions are not uniquely determined, these values are determined by giving $\exp(\delta / \lambda)$ as a hyperparameter $\alpha$.
Then, we can always calculate $\lambda = (\frac{3\alpha-2}{\alpha-1}) \cdot (\frac{2}{\gamma \epsilon_r})$ and $\delta = \lambda \log{\alpha}$, in turn, which satisfies $(*2) \le \gamma \epsilon_r$.
$\alpha$ is valid for $\alpha > 1$.
If $\alpha$ is extremely close to 1, $\lambda$ diverges and \textit{random convergence} is too inaccurate.
As $\alpha$ increases, $\lambda$ decreases, but $\delta$ increases.
Thus $\lambda$ and $\delta$ are trade-offs, and independently of the dataset, there exists a point at which both values are reasonably small.
Around $\alpha=1.4\sim1.8$ works well empirically.
Please see Appendix \ref{appendix:hyperparameters} for a specific analytical result.

Finally, together with $(*1)$, the recursive bisection by random converge and random cut satisfies $\epsilon_r$-DP.
In addition, the perturbation consumes $\epsilon_p$ for each block to add Laplace noise, so together with this, \method satisfies $\epsilon_p + \epsilon_r = \epsilon_b$-DP.

\subsection{Error Analysis}
When a p-view created by \method publishes a counting query answer, we can dynamically estimate an upper bound distribution of the error included in the noisy answer.
The upper bound of the error can be computed from the number of blocks used to answer the query and the distribution of the perturbation.
Note that this can be computed without consuming any extra privacy budget because, as shown in \ref{sec:privacy_accouting}, in addition to the count values, block partitioning results are released in a DP manner.

As a count query on the p-view is processed as Eq. (\ref{eq:count_query}), the answer consists of the sum of the query results for each block, and from Eq. (\ref{eq:two_errors}), each block contains two types of errors: AE and PE.
Let the error of a counting query $q$ be $Error(q, \mathcal{X}, \mathcal{X'}) := ||q(\mathcal{X}) -  q(\mathcal{X'})||_1 $ where $\mathcal{X}$ and $\mathcal{X'}$ are the original and noisy data, respectively, and we define the error by the L1-norm.
First, since the AE depends on the concrete count values of each block involved in each query condition, we characterise the block distribution by defining an $\xi$-uniformly scattered block.
\begin{definition}[$\xi$-uniformly scattered]
A block $\mathcal{B}$ is $\xi$-uniformly scattered if for any subblock $\mathcal{B'} \subset \mathcal{B}$,
\begin{equation}
\small
AE(\mathcal{B'})/|\mathcal{B'}| \le \xi \cdot AE(\mathcal{B})/|\mathcal{B}| .
\end{equation}
\end{definition}
\noindent
While $\xi$ depends on the actual data, it is expected to decrease with each step by random cut.

Then, we have the following theorem for the error.
\begin{theorem}
If for all $i$, block $\mathcal{B}_i$ is $\xi_i$-uniformly scattered, any $\mu$ satisfying $0 < \mu < 1$, and any $t$ satisfying $|t| < \epsilon_p$ and $|t| < \frac{1}{\lambda}$, the error of a counting query satisfies $Error(q, \mathcal{X}, \mathcal{X'}) \ge \Theta_{min}(\mu)$ and $Error(q, \mathcal{X}, \mathcal{X'}) \le \Theta_{max}(\mu)$ with probability of at least $1-\mu$, respectively, with
\begin{align}
\begin{aligned}
\small
\Theta_{min}(\mu) &= \frac{1}{t} \left( \log{\mu} + \sum_{i=1,...,m}{\log{(1-(\frac{w_i}{\epsilon_p})^2 t^2)}} \right) \\
\Theta_{max}(\mu) &= \sum_{i=1,...,m}{\xi_i w_i(k_i \delta + \theta)} \\
    &\; - \frac{1}{t} \left( \log{\mu} + \sum_{i=1,...,m}{\log{(1-(\frac{w_i}{\epsilon_p})^2 t^2)} + \log{(1-(\xi_i w_i\lambda)^2 t^2)}} \right)
\end{aligned}
\end{align}
where $w_i = \frac{ \left| \mathcal{B}_i \cap c_q \right| }{\left| \mathcal{B}_i \right|}$, $k_i$ is depth of $\mathcal{B}_i$ that can be public information.
\end{theorem}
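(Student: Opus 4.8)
The plan is to open up the query answer on the noisy view via Eq.~(\ref{eq:count_query}) and split its discrepancy from the true answer into a deterministic aggregation part and a random perturbation part. Writing $w_i = |\mathcal{B}_i \cap c_q|/|\mathcal{B}_i|$ and letting $T_i = \sum_{x \in \mathcal{B}_i \cap c_q} x$ be the true contribution of block $i$, Eq.~(\ref{eq:count_query}) reads $q(\Tilde{\mathcal{X}}) = \sum_i w_i \Tilde{S_i} = \sum_i w_i(S_i + z_i)$ with $z_i \sim Lap(1/\epsilon_p)$, so that
\begin{equation}
\nonumber
q(\Tilde{\mathcal{X}}) - q(\mathcal{X}) = \sum_{i=1}^{m}\underbrace{\left(w_i S_i - T_i\right)}_{a_i} + \sum_{i=1}^{m} w_i z_i .
\end{equation}
Since $\sum_{x \in \mathcal{B}_i \cap c_q}(\bar{x}_i - x) = w_i S_i - T_i$ with $\bar{x}_i = S_i/|\mathcal{B}_i|$, the term $a_i$ is exactly the boundary aggregation error incurred when the query cuts $\mathcal{B}_i$ (it vanishes when $\mathcal{B}_i \subseteq c_q$, i.e.\ $w_i = 1$), while $\sum_i w_i z_i$ is the accumulated perturbation error. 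I would treat $\{a_i\}$ as data-dependent but deterministic and $\{w_i z_i\}$ as the randomness.

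Next I would bound each $|a_i|$ through $\text{AE}(\mathcal{B}_i)$. Because $\mathcal{B}_i \cap c_q$ is a subblock of $\mathcal{B}_i$, the $\xi_i$-uniformly scattered assumption controls the aggregation error restricted to the intersection, and after substituting $|\mathcal{B}_i \cap c_q| = w_i|\mathcal{B}_i|$ this yields $|a_i| \le \xi_i w_i\,\text{AE}(\mathcal{B}_i)$. I would then convert $\text{AE}(\mathcal{B}_i)$ into the public quantities $k_i,\delta,\theta$ via the random-converge stopping rule: a converged block passed $\text{BAE}(\mathcal{B}_i) + Lap(\lambda) \le \theta$ at depth $k_i$, and since $\text{BAE}(\mathcal{B}_i) \ge \text{AE}(\mathcal{B}_i) - k_i\delta$ (Line~\ref{line:bae}), this gives $\text{AE}(\mathcal{B}_i) \le k_i\delta + \theta - \ell_i$ with $\ell_i \sim Lap(\lambda)$ the realized convergence noise. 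Hence $|a_i| \le \xi_i w_i(k_i\delta+\theta) - \xi_i w_i\ell_i$, which simultaneously produces the deterministic offset $\sum_i \xi_i w_i(k_i\delta+\theta)$ of $\Theta_{max}$ and injects an independent Laplace term of scale $\xi_i w_i\lambda$. Monotonicity of AE under splitting (Theorem~\ref{theorem:monotonic_ae}) ensures these depth-indexed quantities are well defined along the recursion.

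The remaining fluctuation is then a sum of independent Laplace variables: perturbation terms $w_i z_i$ of scale $w_i/\epsilon_p$ and convergence terms $\xi_i w_i\ell_i$ of scale $\xi_i w_i\lambda$. I would apply the Laplace moment generating function $\mathbb{E}[e^{tz}]=(1-b^2t^2)^{-1}$, valid for $|t|<1/b$, together with a Chernoff bound $\Pr[Y \ge s] \le e^{-ts}\prod_i \mathbb{E}[e^{t(\cdot)}]$. For the upper bound I would keep both noise families, set the tail probability equal to $\mu$, solve for the threshold, and add the deterministic offset, obtaining exactly
\begin{equation}
\nonumber
\Theta_{max}(\mu) = \sum_{i} \xi_i w_i(k_i\delta+\theta) - \frac{1}{t}\left(\log\mu + \sum_i \log\!\big(1-(\tfrac{w_i}{\epsilon_p})^2 t^2\big) + \log\!\big(1-(\xi_i w_i\lambda)^2 t^2\big)\right).
\end{equation}
For $\Theta_{min}(\mu)$ I would discard the nonnegative aggregation contribution and bound only $\sum_i w_i z_i$ from the opposite side, leaving solely the $\log(1-(w_i/\epsilon_p)^2 t^2)$ terms. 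The side conditions $|t|<\epsilon_p$ and $|t|<1/\lambda$ are precisely what keep each MGF finite (using $w_i \le 1$ and $\xi_i w_i \le 1$).

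The main obstacle is the aggregation step. The quantity $a_i$ equals $|\mathcal{B}_i \cap c_q|\,(\bar{x}_i - \bar{x}'_i)$, the gap between the sub-block mean $\bar{x}'_i$ and the block mean $\bar{x}_i$, whereas the $\xi$-uniformly scattered definition speaks of $\text{AE}(\mathcal{B}_i \cap c_q)$, the internal spread of the intersection about its own mean; passing cleanly from one to the other, and verifying that $\xi_i$ is genuinely large enough that $|a_i| \le \xi_i w_i\,\text{AE}(\mathcal{B}_i)$ holds rather than merely heuristically, is the delicate part. A secondary difficulty is threading the convergence noise $\ell_i$ through the AE bound with the correct sign so it surfaces as an independent $Lap(\xi_i w_i\lambda)$ term, and fixing the sign of $t$ in each Chernoff bound so that $\Theta_{min}$ and $\Theta_{max}$ come out as stated; once these are settled the probabilistic step is a routine MGF computation.
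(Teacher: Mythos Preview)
Your proposal is essentially the same as the paper's proof: decompose the error into per-block PE terms $w_i z_i$ with $z_i\sim Lap(1/\epsilon_p)$ and per-block AE terms, bound the AE terms via the random-converge stopping condition $\text{BAE}(\mathcal{B}_i)+Lap(\lambda)\le\theta$ together with the $\xi_i$-uniformly scattered assumption to get $\xi_i w_i(k_i\delta+\theta)$ plus an independent $Lap(\xi_i w_i\lambda)$ contribution, and then apply a Chernoff bound with the Laplace MGF $(1-b^2t^2)^{-1}$ for both tails. The subtlety you flag---that the $\xi$-uniformly scattered definition controls $\text{AE}(\mathcal{B}_i\cap c_q)$ rather than the mean-gap quantity $|a_i|$ directly---is handled in the paper only at the level of the one-line assertion ``AE included in the query $q$ and block $\mathcal{B}_i$ is at most $\xi_i w_i\,\text{AE}(\mathcal{B}_i)$,'' so your caution there is well placed but does not constitute a divergence from the paper's argument.
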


\begin{proof}
The errors included in $Error(q, \mathcal{X}, \mathcal{X'})$ are PEs and AEs.
Both of them follow independent probability distributions for each block, and we first show the PE.
For each $\mathcal{B}_i$, perturbation noise is uniformly divided inside $\mathcal{B}_i$. 
Hence, the total PE in the query $q$ is represented by
$\sum_{i=1,...,m}{w_i* PE(\mathcal{B}_i) }$ where $w_i=\frac{ \left| \mathcal{B}_i \cap c_q \right| }{\left| \mathcal{B}_i \right|}$ and $PE(\mathcal{B}_i)$ is Laplace random variable following $Lap(\frac{1}{\epsilon_p})$.

Then, we consider the AE.
From random converge, given a $\mathcal{B}_i$, then $BAE(\mathcal{B}_i)+Lap(\lambda) \le \theta$ holds.
Considering $\text{BAE}(\mathcal{B}) = \mbox{max}(\theta + 2 - \delta,\,  \text{AE}(\mathcal{B})-k\delta )$,
when $\theta + 2 - \delta \le \text{AE}(\mathcal{B}_i) - k_{i}\delta$,
\begin{equation}
    \small
    \text{AE}(\mathcal{B}_i) \le Lap(\lambda) + k_{i}\delta + \theta .
\end{equation}
And when $\theta + 2 - \delta > \text{AE}(\mathcal{B}_i) - k_{i}\delta$, 
\begin{equation}
    \small
    \text{AE}(\mathcal{B}_i) - k_i \delta < \theta + 2 - \delta \le Lap(\lambda) + \theta
\end{equation}
Thus, the upper bound of $\text{AE}(\mathcal{B}_i)$ is distributed under $Lap(\lambda) + k_{i}\delta + \theta$.
In other words, AE cannot be observed directly, but the upper bound distribution is bounded by the Laplace distribution.
Also note that the AE satisfies $\text{AE}(\mathcal{B}_i) \ge 0 $.

Therefore, for the error lower bound, we only need to consider the $m$ PEs, $\sum_{i=1,...,m}{w_i* PE(\mathcal{B}_i) }$.
$PE(\mathcal{B}_i)$ is independent random variable, respectively.
We apply Chernoff bound to the sum, for any $a$ and $t$,
\begin{equation}
    \label{eq:chernoff_1}
    \small
    \Pr\left[Error(q, \mathcal{X}, \mathcal{X'}) \le a\right] \le e^{(ta)}\prod_{i=1,...,m}{E[e^{(-t w_i PE(\mathcal{B}_i))}]},
\end{equation}
where $|t|<\epsilon_p$ is required for existence of the moment generating function.
By using $PE(\mathcal{B}_i)$ follows $Lap(\frac{1}{\epsilon_p})$, we can derive
\begin{equation}
    \small  
    \Pr\left[Error(q, \mathcal{X}, \mathcal{X'}) \le \frac{1}{t} \left( \log{\mu} + \sum_{i=1,...,m}{\log{(1-(\frac{w_i}{\epsilon_p})^2 t^2)}} \right)\right] \le \mu .
\end{equation}

On the other hand, for the error upper bound, we need to consider AEs as well.
Hence, we apply Chernoff bound to the sum of $2m$ independent random variables following each Laplace distribution.
Considering the upper bound distribution of $AE(\mathcal{B}_i)$ has $w_i(k_i\delta + \theta)$ for the mean and $\lambda$ for the variance, let $\bar{AE}(\mathcal{B}_i)$ be a Laplace random variable whose mean and variance are 0 and $\lambda$, respectively, then we have
\begin{align}
\begin{aligned}
    \small  
    \Pr\Bigl[&Error(q, \mathcal{X}, \mathcal{X'}) - \sum_{i=1,...,m}{\xi_i w_i(k_i\delta + \theta)} \ge a \Bigr] \\
    &\le e^{(-ta)}\prod_{i=1,...,m}{E[e^{(t w_i PE(\mathcal{B}_i))}]E[e^{(t w_i \xi_i \bar{AE}(\mathcal{B}_i))}]}, 
\end{aligned}
\end{align}
where since block $\mathcal{B}_i$ is $\xi_i$-uniformly scattered, AE included in the query $q$ and block $\mathcal{B}_i$ is at most $\xi_i w_i AE(\mathcal{B}_i)$.
Lastly, for any $t$, where $|t| < \epsilon_p$ and $|t| < \frac{1}{\lambda}$, from the inequality, we can derive as follows:
\begin{align*}
    \small
    &\Pr \Biggl[Error(q, \mathcal{X}, \mathcal{X'}) \ge \sum_{i=1,...,m}{\xi_i w_i(k_i \delta + \theta)} \\
    &\;\;\; - \frac{1}{t} \left( \log{\mu} + \sum_{i=1,...,m}{\log{(1-(\frac{w_i}{\epsilon_p})^2 t^2)} + \log{(1-(\xi_i w_i\lambda)^2 t^2)}} \right)\Biggr] \\
    &\le \mu .
\end{align*}
This completes the proof.
\end{proof}
\noindent
Importantly, this can be dynamically computed for any counting queries, helping the analyst to perform a reliable exploration.

Similarly, since the HDMM \cite{mckenna2018optimizing} optimizes budget allocations for counting queries by the MM, we can statically calculate the error distributions for each query.
However, this is workload-dependent.
In data exploration, we consider predefined workload is strong assumption to be avoided.

\section{Evaluation}

In this section, we report the results of the experimental evaluation of \method.
To evaluate our proposed method, we design experiments to answer the following questions:
\begin{itemize}
    \item How effectively can the constructed p-views be used in data exploration via various range counting queries?
    \item How space-efficiently can the constructed p-views represent high-dimensional count tensors?
\end{itemize}
We shows the effectiveness of \method via range counting queries in section \ref{sec:range_query}, 
and section \ref{sec:space_efficiency} reports the space efficiency.

\begin{table}[t]
    \centering
    \caption{Datasets.}
    \small
    \begin{tabular}{lrrrrr}
    \toprule
    \multicolumn{1}{l}{Dataset} & \multicolumn{1}{l}{\#Record} & \multicolumn{1}{l}{\begin{tabular}[c]{@{}c@{}}\#Column \\ (categorical) \end{tabular}} & \multicolumn{1}{l}{\#Domain} & Variance \\
    \midrule
    {\tt Adult} \cite{dataset_adult}         & 48842    & 15 (9)        & $9\times10^{19}$ & 0.0360\\
    {\tt Small-adult}	 & 48842    & 4 (2)         & $3\times10^{5}$ & 0.0237\\
    {\tt Numerical-adult}& 48842    & 7 (1)         & $2\times10^{11}$ & 0.0200 \\
    {\tt Traffic} \cite{dataset_traffic}    & 48204	& 8 (2)         & $1 \times 10^{14}$ & 0.0484\\
    {\tt Bitcoin} \cite{dataset_bitcoin}    & 500000   & 9 (1)         & $4 \times 10^{12}$ & 0.0379\\
    {\tt Electricity} \cite{dataset_electricity}	 & 45312	& 8 (1)         & $1 \times10^{14}$ & 0.0407\\
    {\tt Phoneme} \cite{dataset_phoneme}	     & 5404	    & 6 (1)         & $2 \times10^{6}$ & 0.0304\\
    {\tt Jm1} \cite{dataset_jm1}       & 10885    & 22 (1)        & $2 \times 10^{21}$ & 0.0027\\
    \bottomrule
    \end{tabular}
\label{tbl:dataset}
\end{table}

\subsection{Experimental Setup} \label{sec:exp_setup}

We describe the experimental setups.
In the following experiments, we run 10 trials with \method and the competitors and report their averages to remove bias.
Throughout the experiments, the hyperparameters of \method are fixed as $(\epsilon_{r}/\epsilon_{b}, \alpha, \beta, \gamma) = (0.9, 1.6, 1.2, 0.9)$. 
Please see Appendix \ref{appendix:hyperparameters} for insights on how to determine the hyperparameters.
We provide observations and insights into all the hyperparameters of \method in Appendix of the full version \cite{fullversion}.

\noindent
\textbf{Datasets.}
We use several multidimensional datasets commonly used in the literature, as shown in Table \ref{tbl:dataset}.
{\tt Adult} \cite{dataset_adult} includes 6 numerical and 9 categorical attributes.
We prepare {\tt Small-adult} by extracting 4 attributes (age, workclass, race, and capital-gain) from {\tt Adult}.
Additionally, we form {\tt Numerical-adult} by extracting only numerical attributes and a label.
{\tt Traffic} \cite{dataset_traffic} is a traffic volume dataset.
{\tt Bitcoin} \cite{dataset_bitcoin} is a Bitcoin transaction graph dataset.
{\tt Electricity} \cite{dataset_electricity} is a dataset on changes in electricity prices.
{\tt Phoneme} \cite{dataset_phoneme} is a dataset for distinguishing between nasal and oral sounds.
{\tt Jm1} \cite{dataset_jm1} is a dataset of static source code analysis data for detecting defects with 22 attributes.
\method and most competitors require the binning of all numerical attribute values for each dataset.
Basically, we set the number of bins to 100 or 10 when the attribute is a real number.
We consider that the number of bins should be determined by the level of granularity that analysts want to explore, regardless of the distribution of the data.
For categorical columns, we simply apply ordinal encoding.
In Table \ref{tbl:dataset}, \#Domain shows the total domain sizes after binning.
Variance is the mean of the variance for each dimension of the binned and normalized dataset and gives an indication of how scattered the data is.

\noindent
\textbf{Implementations of competitors.}
We compare our proposed method \method with Identity \cite{dwork2006differential}, Privtree \cite{zhang2016privtree}, HDMM \cite{mckenna2018optimizing}, PrivBayes \cite{zhang2017privbayes}, and DAWA partitioning mechanism \cite{li2014data}.
For these methods, we perform the following pre- and postprocessing steps.
For Identity, we estimate errors following \cite{mckenna2018optimizing}, employing implicit matrix representations and workload-based estimation, because it is infeasible to add noises on a high-dimensional count tensor  because of the huge space.
For Privtree, as described in \cite{zhang2016privtree}, we set the threshold to 0 and allocate half of the privacy budget to tree construction and half to perturbation.
Using the same method as \method, the blocks obtained by Privtree are used as the p-view.
For the HDMM, we utilize p-Identity strategies as a template.
DAWA's partitioning mechanism can be applied to multidimensional data by flattening data into a 1D vector.
However, when the domain size becomes large, the optimization algorithm based on the v-optimal histogram for the count vector cannot be applied due to the computational complexity.
Therefore, we apply DAWA to {\tt Small-adult} and {\tt Phoneme} because their domain sizes are relatively small.
We perform only DAWA partitioning without workload optimization to compare the partitioning capability without a given workload to evaluate workload-independent p-view generation.
For fairness, PrivBayes is trained on raw data\footnote[8]{PrivBayes shows worse performances with binned data in our prestudy.}.
PrivBayes, in counting queries, samples the exact number of original data points; therefore, it may consume extra privacy budget.


\noindent
\textbf{Workloads.}
We prepare different types of workloads.
\textit{$k$-way All Marginal} is all marginal counting queries using all combinations of $k$ attributes.
\textit{$k$-way All Range} is the range version of the marginal query.
\textit{Prefix-$k$D} is a prefix query using all combinations of $k$ attributes.
\textit{Random-$k$D Range Query} is a range query for arbitrary $k$ attributes and we randomly generate 3000 queries for a single workload.

\noindent
\textbf{Reproducibility.}
The experimental code is publicly available on the \url{https://github.com/FumiyukiKato/HDPView}.




\subsection{Effectiveness} \label{sec:range_query}


\begin{figure*}[t]
    \centering
    \includegraphics[width=0.98\hsize]{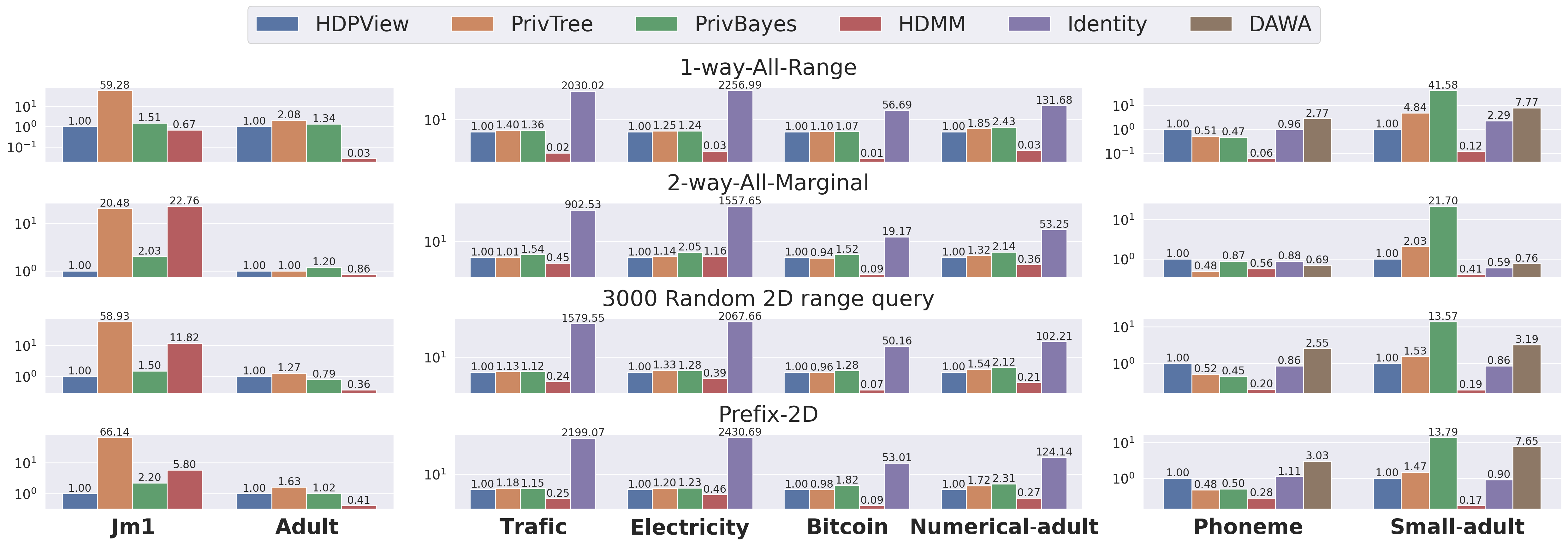}
    \includegraphics[width=0.98\hsize]{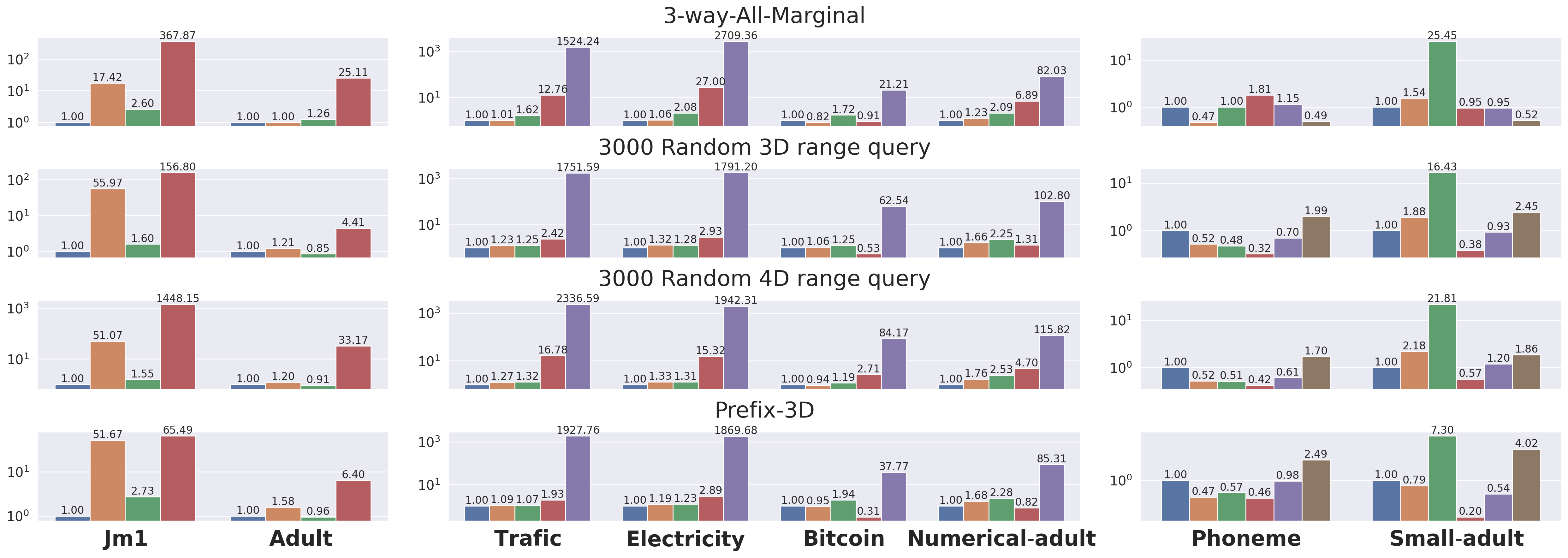}
    \caption{Relative RMSEs against \method on all the datasets and workloads: \method shows small errors for a wide variety of high-dimensional range counting queries.}
    \label{fig:query_rmse}
\end{figure*}


We evaluate how effective p-views constructed by \method are in data exploration by issuing various range counting queries.

\noindent
\textbf{Evaluation metrics.}
We evaluate \method and other mechanisms by measuring the RMSE for all counting queries.
Formally, given the count tensor $\mathcal{X}$, randomized view $\mathcal{X'}$ and workload $\mathbf{W}$, the RMSE is defined as: $\mbox{RMSE} = \sqrt{\frac{1}{|\mathbf{W}|}\sum_{q \in \mathbf{W}}(q(\mathcal{X})-q(\mathcal{X'}))^{2}}$.
This metric is useful for showing the utility of the p-view.
It corresponds to the objective function optimized by MM families \cite{li2015matrix, mckenna2018optimizing}, where given a workload matrix $\bm{W}$ and a query strategy $\bm{A}$, which is the optimized query set to answer the workload, the expected error of the workloads is $\frac{2}{\epsilon^2} ||\bm{A}||_{1}^2||\bm{W}\bm{A}^{+}||^2_{F} = \mbox{RMSE}^2$.
Thus, we can compare the measured errors with this optimized estimated errors.
We also report the relative RMSE against \method for comparison.

\noindent
\textbf{High quality on average.}
Figure \ref{fig:query_rmse} shows the relative RMSEs for all datasets and workloads and algorithms with privacy budget $\epsilon$=1.0.
The relative RMSE (log-scale) is plotted on the vertical axis and the dataset on the horizontal axis where high-dimensional datasets ({\tt Jm1} and {\tt Adult}) are on the left, medium-dimensional datasets ({\tt Traffic}, {\tt Electricity}, {\tt Bitcoin} and {\tt Numerical-adult}) are in the middle, and low-dimensional datasets ({\tt Small-adult} and {\tt Phoneme}) are on the right.
The errors with Identity for high-dimensional data are too large and are omitted for appearance.
As a whole, \method works well.
In Section \ref{sec:intro}, Table \ref{tbl:query_arr} shows the relative RMSE averaged over all workloads and all datasets in Figure \ref{fig:query_rmse}, and \method achieves the lowest error on average.
In data exploration, we want to run a variety of queries, so the average accuracy is important.
We believe \method has such a desirable property.
A detailed comparisons with the competitors are explained in the following paragraphs.

\noindent
\textbf{Comparison with Identity, HDMM and DAWA.}
Identity, which is the most basic Baseline, and HDMM, which performs workload optimization, cause more errors for high-dimensional datasets than \method.
For Identity, the reason is that the accumulation of noise increases as the number of domains increases.
\method avoids the noise accumulation by grouping domains into large blocks.
The results of HDMM show that the increasing dimension of the dataset and the dimension of the query can increase the error.
This is because the matrix representing the counting queries to which the matrix mechanism is applied becomes complicated, making it hard to find efficient budget allocations.
This is why the accuracy of the 3- or 4-dimensional queries for {\tt Jm1} and {\tt Adult} is poor with HDMM.
In particular, the HDMM's sensitivity to dimensionality increases can also be seen in Figure \ref{fig:query_add_attr}.
DAWA's partitioning leads more errors than the \method and Privtree.
When applied to multi-dimensional data, DAWA finds the optimal partitioning on a domain mapped in one-dimension, while \method and Privtree finds more effective multi-dimensional data partitioning.



\begin{figure}[t]
\begin{minipage}[t]{0.48\hsize}
    \centering
    \includegraphics[width=\hsize]{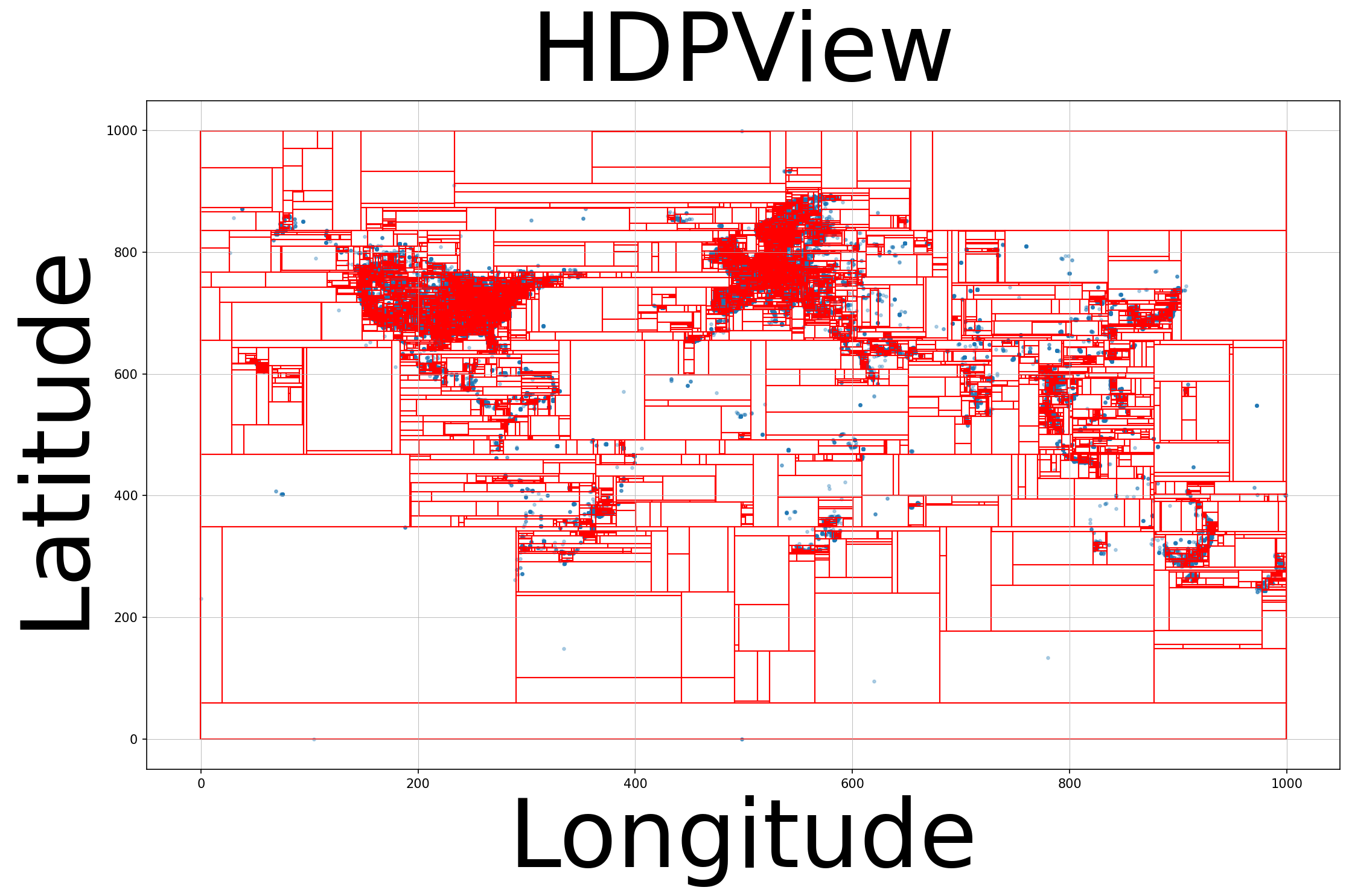}
    \vfill
    \begin{tabular}{ll}
        \hline
        \#Blocks & 15669  \\ \hline
        RMSE    & 657.37 \\ \hline
    \end{tabular}
\end{minipage}
\hfill
\begin{minipage}[t]{0.48\hsize}
    \centering
    \includegraphics[width=\hsize]{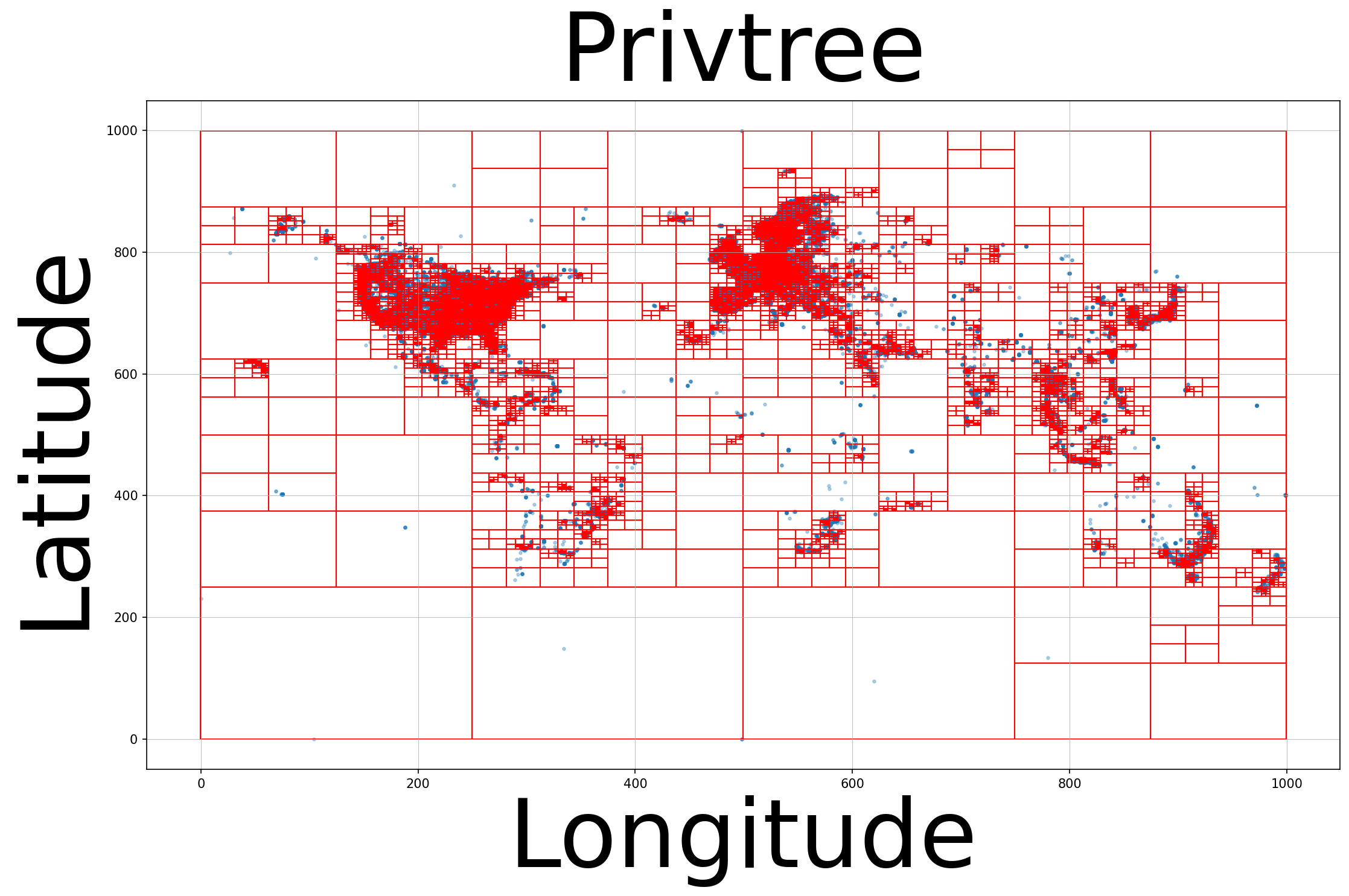}
    \vfill
    \begin{tabular}{ll}
        \hline
        \#Blocks & 19475  \\ \hline
        RMSE    & 315.79 \\ \hline
    \end{tabular}
\end{minipage}
\caption{Examples of \method (left) and Privtree (right) on 2D dataset ({\tt Gowalla}): \method has fewer blocks, leading to noisier results than Privtree for very low-dimensional data. Also, \method provides flexible block partitioning.}
\label{fig:2d-gowalla}
\end{figure}

\begin{figure}[t]
    \centering
    \includegraphics[width=\hsize]{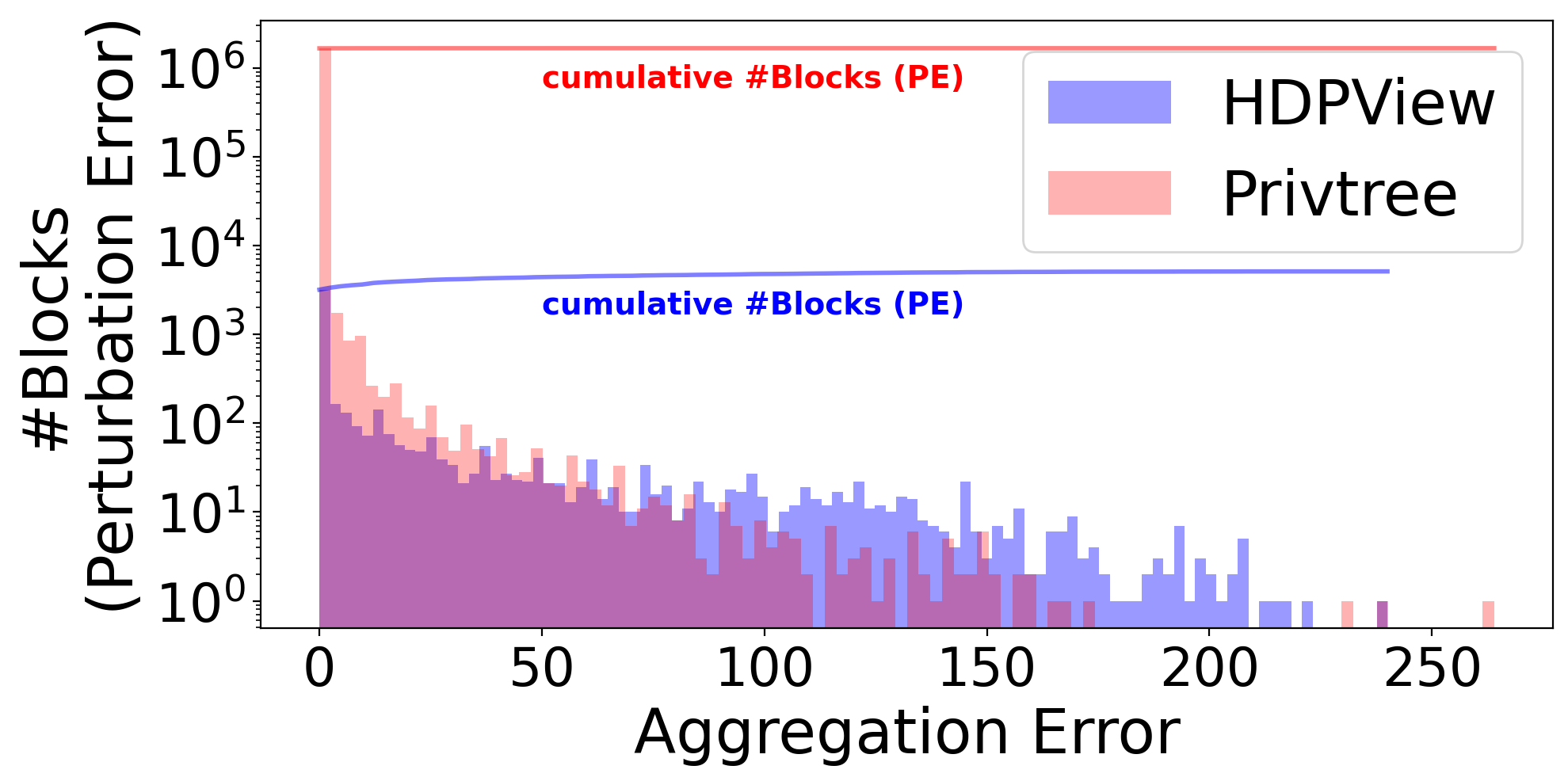}
    \caption{Number of blocks (log-scale) with various AEs for high-dimensional dataset (i.e., {\tt Adult}) for \method and Privtree. \method has slightly larger AE blocks, but Privtree has a much more number of blocks i.e., much larger PEs.}
\label{fig:adult_blocks}
\end{figure}

\begin{figure}[t]
    \centering
    \includegraphics[width=\hsize]{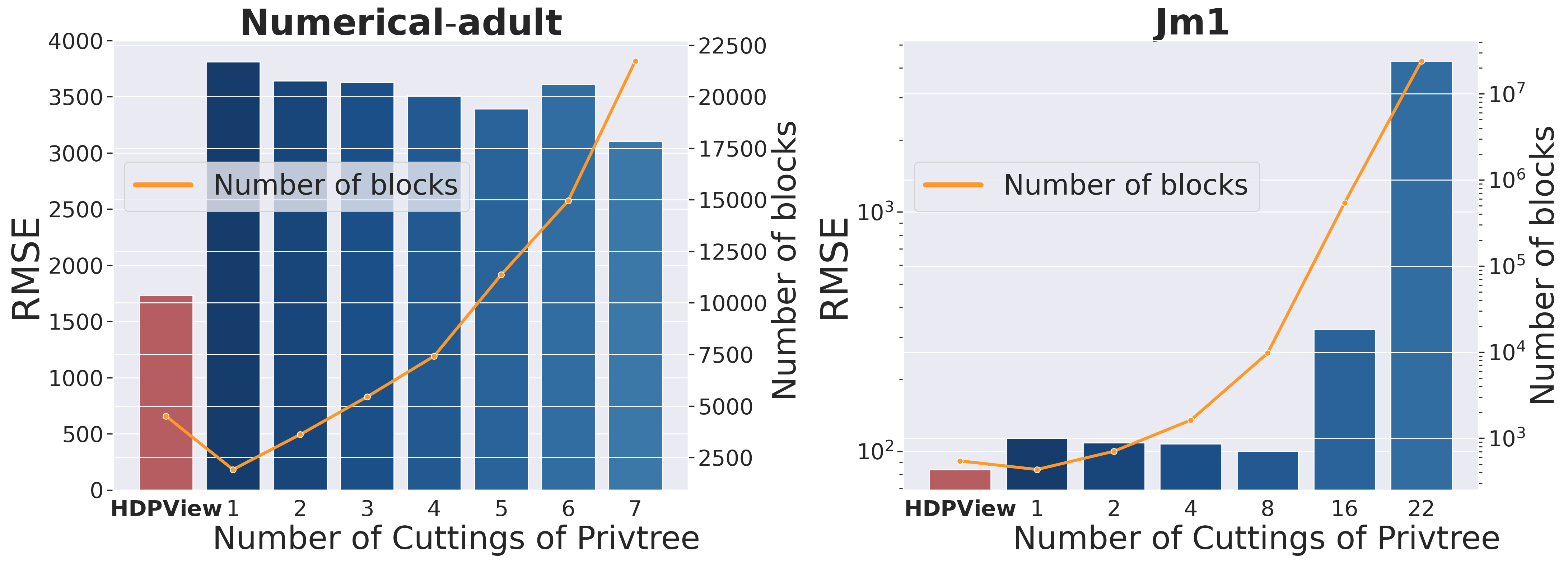}
    \caption{\method is more effective than Privtree even with controlled number of cuttings on {\tt Numerical-adult} (left) and {\tt Jm1} (log-scale) (right).}
    \label{fig:round_robin}
\end{figure}

\noindent
\textbf{Comparison with Privtree.}
Overall, \method outperforms Privtree's accuracy mainly for mid- to high-dimensional datasets.
In particular, we can see Privtree's performance drops drastically in high-dimensionality (i.e., {\tt Jm1}).
Privtree achieves higher accuracy than \method for {\tt Phoneme}.
This is likely because Privtree's strategy, which prioritize finer splitting, are sufficient for the small domain size rather than \method's heuristic algorithm.
Even if the blocks is too fine, the accumulation of PEs is not so large in low-dimensionality, and AEs become smaller, which results in an accurate p-view.
The reason why \method is better for {\tt Small-adult} despite the low-dimensionality may be that the sizes of the cardinality of attributes are uneven ({\tt Small-adult}: \{74, 9, 5, 100\}, {\tt Phoneme}: \{10, 10, 10, 10, 10, 10, 2\}), which may make Privtree's fixed cutting strategy ineffective.
To see the very low-dimensional case, Figure \ref{fig:2d-gowalla} shows the block partitioning for the 2D data with a popular {\tt Gowalla} \cite{dataset_gowalla} check-in dataset.
The table below shows the number of blocks and the RMSE for the 3000 Random 2D range query.
\method yields fewer blocks and Privtree generates a less noisy p-view for the abovementioned reason.
The figure also confirms that \method performs a flexible shape of block partitioning.

On the other hand, for high-dimensional dataset, this property can be avenged.
In Privtree, a single cutting always generates $2^d$ new blocks, which are too fine, resulting in very large PEs even though the AEs are smaller.
Figure \ref{fig:adult_blocks} shows the distribution of AEs for blocks on {\tt Adult} for \method and Privtree.
\method has slightly larger AE blocks, but Privtree has a large number of blocks and cause larger PEs.
An extreme case is {\tt Jm1} in which Privtree causes large errors. 
This is probably because {\tt Jm1} actually requires fewer blocks since the distribution is highly concentrated (c.f., Table \ref{tbl:dataset}).
Figure \ref{fig:block_size_comparison} shows that the number of blocks of generated p-view by \method and Privtree.
For {\tt Jm1}, \method generates very small number of blocks while Privtree does not.
We can confirm that \method avoids unnecessary splitting via \textit{random cut} and suppresses the increase in PEs which causes in Privtree.
This would be noticeable for datasets with concentrated distributions, where the required number of blocks is essentially small.

Figure \ref{fig:round_robin} shows the results of reducing the number of cut attributes in Privtree and adjusting the number of blocks in p-view on {\tt Numerical-adult} and {\tt Jm1}.
If the number of cut attributes is smaller than the dimension $d$, we choose target attributes in a round-robin way (Appendix of \cite{zhang2016privtree}).
In the case of {\tt Numerical-adult}, the error basically decreases as the number of cut attributes is increased, similar to the observation in Appendix of \cite{zhang2016privtree}.
However, for high-dimensional data such as {\tt Jm1}, the error increases rapidly as the number of cut attributes increases to some extent.
This is consistent with the earlier observation that influence of PEs increases.
Also, in any cases, the error of \method is smaller, indicating that \method not only has a smaller number of blocks, but also performs effective block partitioning compared to Privtree on these datasets.



\begin{figure}[t]
    \centering
    \includegraphics[width=\hsize]{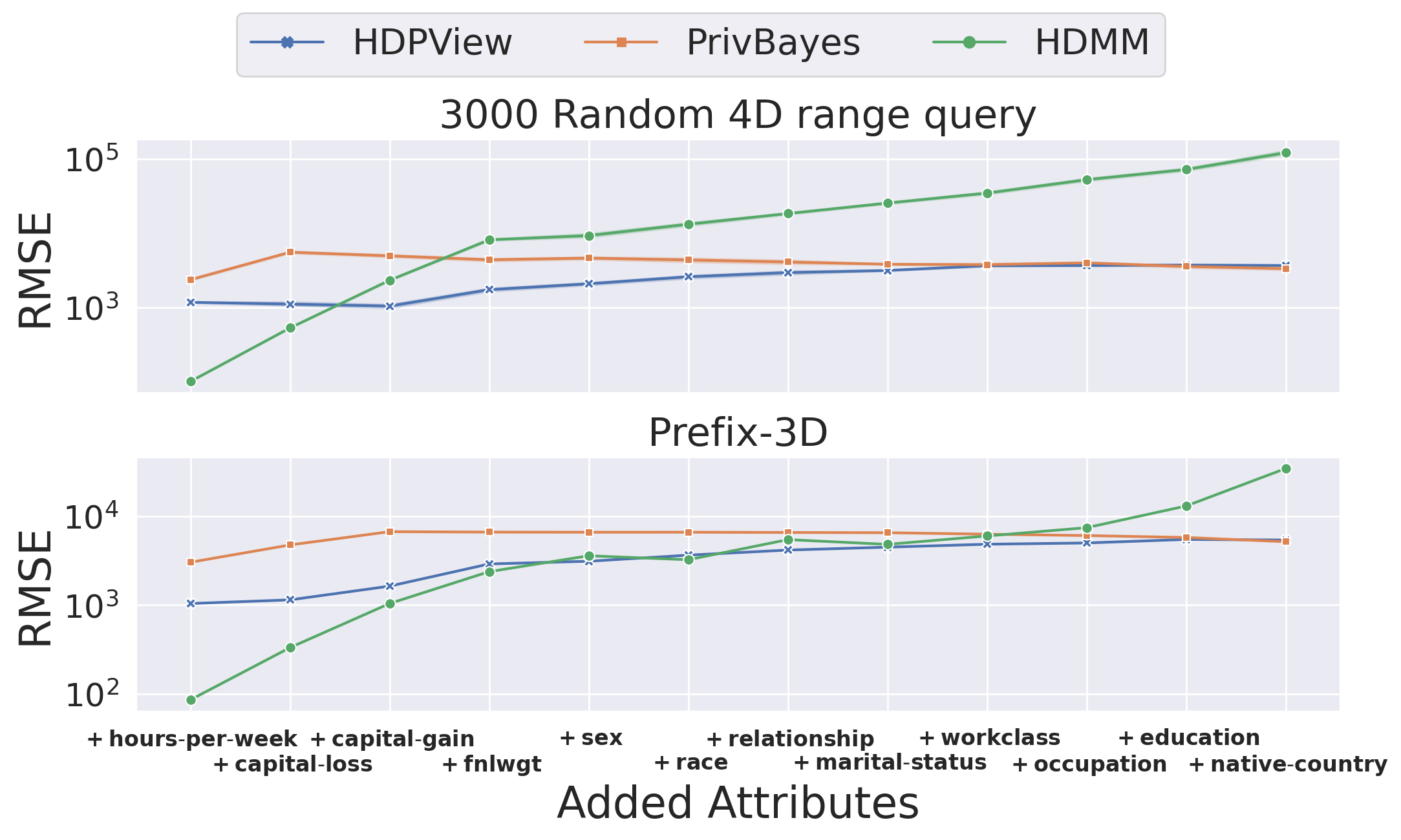}
    \caption{Changes in the performance when adding attributes to {\tt Adult} one by one in \method, PrivBayes, and HDMM.}
    \label{fig:query_add_attr}
\end{figure}

\noindent
\textbf{Comparison with PrivBayes.}
We do not consider PrivBayes a direct competitor because it is a generative model approach that does not provide any analytical reliability as described in Section \ref{sec:related_works}.
However, PrivBayes is a state-of-the-art specialized for publishing differentially private marginal queries; therefore, we compared the accuracy to demonstrate the performance of \method.
As shown in Figure \ref{fig:query_rmse}, \method is a little more accurate than PrivBayes in many cases.
However, in {\tt Adult}, PrivBayes slightly outperforms \method.
Because PrivBayes uses Bayesian network to learn the data distribution, it can fit well even to high-dimensional data as long as the distribution of the data is easily modelable.
In \method, with larger dimensionality, the PEs grow slightly because the total number of blocks increases. 
The AEs also grow since more times of random converge result in larger errors.
Thus, the total error is at least expected to increase, and the larger dimensionality may work to the advantage of PrivBayes.
Still, \method is advantageous, especially for concentrated data such as {\tt Jm1}.

We consider the reason why on {\tt Numerical-adult}, which has a smaller dimensionality than {\tt Adult}, PrivBayes is less accurate than \method is because the effective attributes for capturing the accurate marginal distributions with Bayesian network are removed.
We can see the same behavior for {\tt Small-adult}.
The following experimental results can support this.
Figure \ref{fig:query_add_attr} describes the changes in the RMSE with attributes added to {\tt Adult} one by one in two workloads, where the added attributes are shown on the horizontal axis.
Initially, \method is more accurate than PrivBayes.
As attributes are added, \method is basically robust with increasing dimensionality, but the error increases slightly.
On the other hand, interestingly, the error in PrivBayes becomes slightly smaller.

Lastly, considering \method is better in {\tt Numerical-adult} and worse in {\tt Adult}, one of the advantages of PrivBayes may be due to the increase in categorical attributes.
Because \method bisects the ordered domain space, it may be hard to effectively divide categorical attributes, which possibly worsens the accuracy in \method.

\begin{figure}[t]
    \centering
    \includegraphics[width=\hsize]{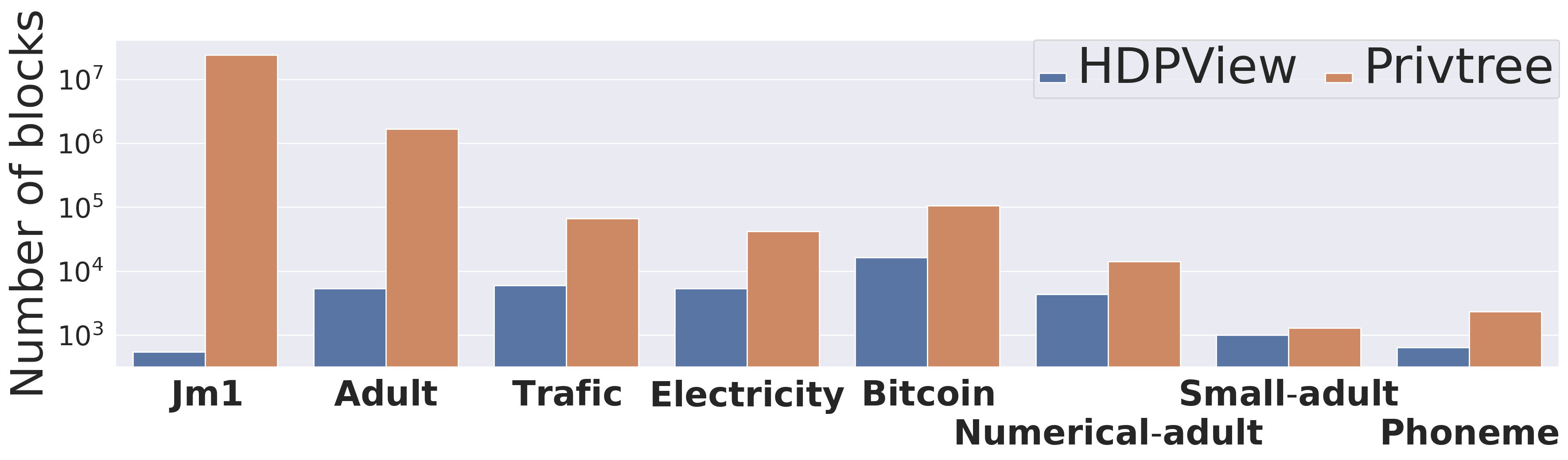}
    \caption{The number of blocks generated by \method is much lower than that generated by Privtree.} 
    \label{fig:block_size_comparison}
\end{figure}


\subsection{Space Efficiency}
\label{sec:space_efficiency}
Our proposed p-view stores each block in a single record.
This method avoids redundancy in recording all cells that belong to the same block.
The p-view consists of blocks and values, and basically, the space complexity follows the number of blocks.
Figure \ref{fig:block_size_comparison} shows a comparison between the numbers of blocks of \method and Privtree.
While the accuracy of the counting queries of \method is higher than that of Privtree, the number of blocks generated by \method is much lower than that of Privtree, indicating that the strategy of \method avoids unnecessary splitting.
In particular, on {\tt Jm1}, \method is $4\times 10^{4}$ more efficient than Privtree.
Table \ref{tbl:space_efficiency} shows the size of the randomized views, Identity-based noisy count vector (not p-view) and p-view generated by \method at $\epsilon$=1.0.
Since \method constructs the p-view by a compact representation, it results in up to $10^{13}$ times smaller space on {\tt Adult}.

\begin{table}[t]
    \centering
    \caption{\method's p-view is space efficient (up to $10^{13}\times$).}
    \small
    \begin{tabular}{lrr}
    \toprule
    Dataset & Identity-based & \method \\
    \midrule
    {\tt Adult}       & 30.99 EB  & 3.61 MB \\
    {\tt Bitcoin}     & 1.27 TB   & 6.77 MB \\
    {\tt Electricity} & 1.11 TB   & 2.19 MB \\
    {\tt Phoneme}     & 781.34 KB & 273.59 KB \\
    \bottomrule
    \end{tabular}
\label{tbl:space_efficiency}
\end{table}

\section{Conclusion}

We addressed the following research question: How can we construct a privacy-preserving materialized view to explore the unknown properties of the high-dimensional sensitive data?
To practically construct the p-view, we proposed a data-aware segmentation method, \method.
In our experiments, we confirmed the following desirable properties, 
(1) Effectiveness: \method demonstrated smaller errors for various range counting queries in multidimensional queries. 
(2) Space efficiency: \method generates a compact representation of the p-view.
We believe that our method helps us explore sensitive data in the early stages of data mining while preserving data utility and privacy.



\bibliographystyle{ACM-Reference-Format}
\bibliography{ref}

\appendix
\section{Analysis for Hyperparameters}
\label{appendix:hyperparameters}

\begin{figure}[t]
    \centering
    \includegraphics[width=0.9\hsize]{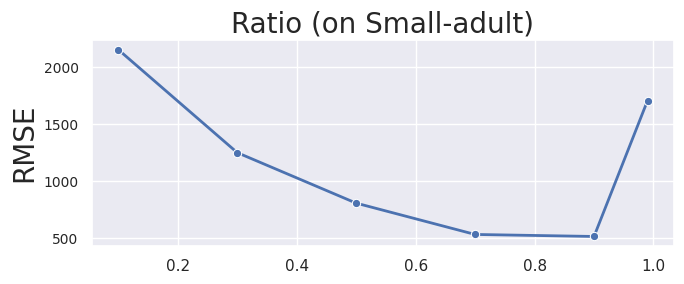}
    \caption{Effects of \method's hyperparameter $\epsilon_r/\epsilon_b$ (Ratio) on \texttt{Small-adult} dataset.}
    \label{fig:hyper_ratio}
\end{figure}

\begin{figure}[t]
    \centering
    \includegraphics[width=0.9\hsize]{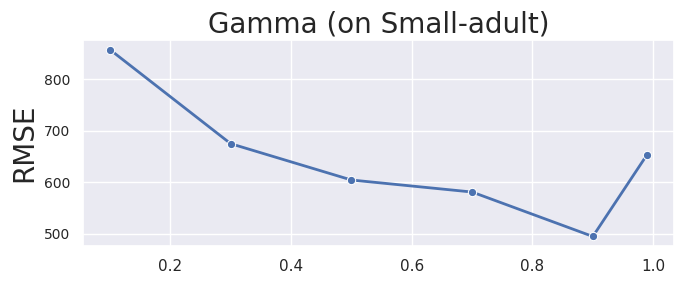}
    \caption{$\gamma$ (Gamma) on \texttt{Small-adult}.}
    \label{fig:hyper_gamma}
\end{figure}

\begin{figure}[t]
    \centering
    \includegraphics[width=0.9\hsize]{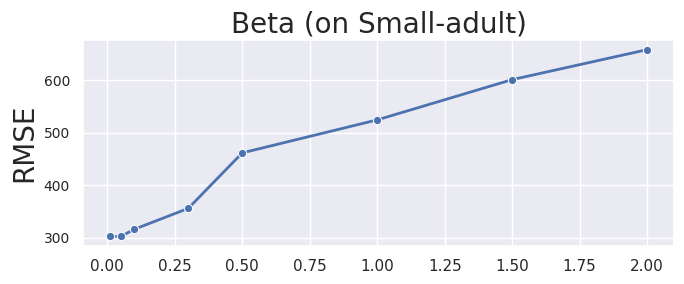}
    \caption{$\beta$ (Beta) on \texttt{Small-adult}.}
    \label{fig:hyper_beta}
\end{figure}

\begin{figure}[t]
    \centering
    \includegraphics[width=0.9\hsize]{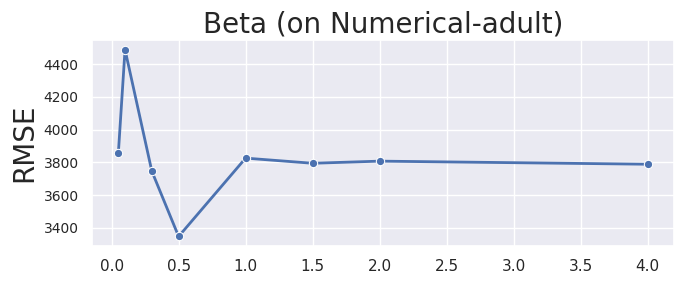}
    \caption{$\beta$ on \texttt{Numerical-adult} dataset: Optimal $\beta$ depends on dataset and our default parameter $\beta=1.2$ is somewhat conservative.}
    \label{fig:hyper_nume}
\end{figure}

\begin{figure}[t]
    \centering
    \includegraphics[width=0.9\hsize]{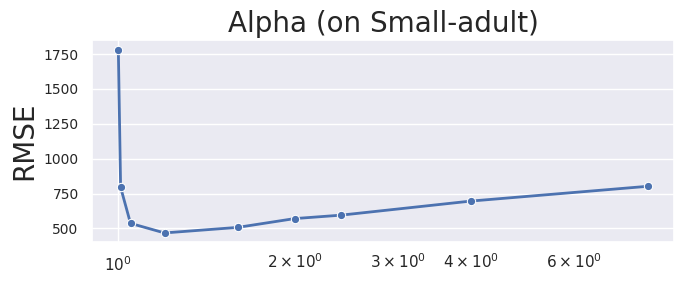}
    \caption{$\alpha$ (Alpha) on \texttt{Small-adult}.}
    \label{fig:hyper_alpha}
\end{figure}

\begin{figure}[t]
\centering
    \includegraphics[width=0.9\hsize]{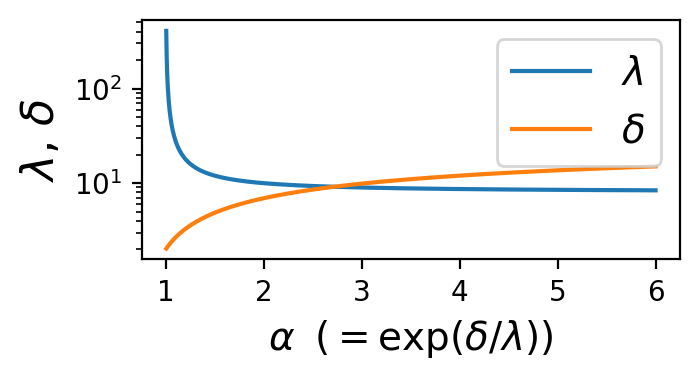}
    \caption{$\lambda$ and $\delta$ on various $\alpha$ when $\gamma \epsilon_r = 1.0$.}
    \label{fig:alpha}
\end{figure}

We provide an explanation of the hyperparameters of \method.
As shown in Algorithm \ref{algo:proposed}, \method requires four main hyperparameters, $\epsilon_r/\epsilon_b$, $\alpha$, $\beta$ and $\gamma$.
We mentioned in Section \ref{sec:exp_setup} that we fix the hyperparameters as $(\epsilon_{r}/\epsilon_{b}, \alpha, \beta, \gamma) = (0.9, 1.6, 1.2, 0.9)$ in our experiments.
Here, we provide some insights into each hyperparameter from observations of experimental results on a real-world dataset {\tt Small-adult} varying each hyperparameter.

Figures \ref{fig:hyper_ratio} \ref{fig:hyper_gamma} \ref{fig:hyper_beta} \ref{fig:hyper_alpha} show the RMSEs for \textit{3000 random 2D range query} on \texttt{Small-adult} dataset when only one of the hyperparameters varies and others are fixed as the abovementioned default.
From this result, we obtain the following insights:

\noindent
\textbf{Ratio, $\epsilon_r/\epsilon_b$}. The best accuracy is achieved when the Ratio is approximately 0.7\textasciitilde 0.9 as shown in Figure \ref{fig:hyper_ratio}. In \method, seemingly, the effect of aggregation error is larger than perturbation error, therefore we try to allocate a budget to the cutting side so that the aggregation error is smaller.

\noindent
\textbf{Gamma, $\gamma$}. As shown in Figure \ref{fig:hyper_gamma}, it was confirmed that prioritizing the\textit{random converge} to accurately determine AEs improves accuracy rather than \textit{random cut}.
However, if no budget is allocated for \textit{random cut}, the error increases, i.e., a completely random cutting strategy lose accuracy compared to appropriate our proposed \textit{random cut}.
Therefore, 0.9 is reasonable.
However, the random cut may be less less significant due to the conservative setting of the $\beta$ shown below.

\noindent
\textbf{Beta, $\beta$}. The $\beta$ is somewhat conservatively determined. 
We choose $\beta = 1.2$ because, when $\beta=1.2$, the maximum depth of \method’s bisection rarely reaches $\kappa$ on various datasets. 
As a rough guideline, if the total number of domains in a given dataset is $\Tilde{n}$, all blocks can be split at $\log_2{\Tilde{n}}$ times depth, assuming the domains are bisected exactly in the all cutting.
Thus $\kappa = 1.2* \log_2{\Tilde{n}}$ is deep enough to split all the blocks, allowing for some skewness, and all the cutting point is likely to be selected by an Exponential Mechanism rather than by random one. 
However, remember the budget for each EM is inversely proportional to $\kappa$, depending on the data set, the budget available for EM may be unnecessarily small due to the unnecessarily large setting of $\kappa$ as is the case with {\tt Small-adult} (Figure \ref{fig:hyper_beta}). 
We also show the {\tt Numerical-adult} result as another example (Figure \ref{fig:hyper_nume}). The small $\beta$ do not take full advantage of the random cut. 
Since determining the optimal $\beta$ for any dataset is impossible without additional privacy consumption, we conservatively set $\beta=1.2$ for all dataset in the experiment.

\noindent
\textbf{Alpha, $\alpha$}. $\alpha$, i.e., $\exp(\delta / \lambda)$, is valid for $\alpha > 1$.
If $\alpha$ is extremely close to 1, $\lambda$ diverges and \method does not work well because random converge causes large errors.
Because $\lambda = (\frac{3\alpha-2}{\alpha-1}) \cdot (\frac{2}{\gamma \epsilon_r})$ and $\delta = \lambda \log{\alpha}$, as $\alpha$ increases, $\lambda$ decreases and converges to 3, but $\delta$ increases.
Thus $\lambda$ and $\delta$ are trade-offs.
When $\delta$ increases, the bias of BAE increases, which also leads to a worse convergence decision.
Figure \ref{fig:alpha} plots the size of $\lambda$ and $\delta$ for various $\alpha$ when $\gamma \epsilon_r = 1.0$.
As $\alpha$ increases, the $\delta$ increases, but the decrease in $\lambda$ is small, starting from approximately 1.4.
Therefore, around $\alpha=1.4\sim1.8$ works well empirically and we use $\alpha=1.6$ as default value.

\end{document}